\newtheorem{theorem}{Theorem}[section]
\newtheorem{proposition}[theorem]{Proposition}
\newtheorem{lemma}[theorem]{Lemma}
\theoremstyle{definition}
\newtheorem{remark}[theorem]{Remark}
\theoremstyle{remark}
\newtheorem{step}{Step}
\newcommand{\N}{\mathbb{N}}
\newcommand{\R}{\mathbb{R}}
\newcommand{\cB}{\mathcal{B}}
\newcommand{\cF}{\mathcal{F}}
\newcommand{\cM}{\mathcal{M}}
\newcommand{\cQ}{\mathcal{Q}}
\newcommand{\cV}{\mathcal{V}}
\newcommand{\cMf}{\cM_\mathrm{fin}}
\newcommand{\tcMf}{\widetilde{\cM}_\mathrm{fin}}
\newcommand{\cVadm}{\cV_\mathrm{adm}}
\newcommand{\cVb}{\cV_\mathrm{b}}
\newcommand{\tQ}{\widetilde{Q}}
\newcommand{\ep}{\varepsilon}
\newcommand{\MNg}[1]{}%
\newcommand{\cP}{\mathcal{P}}
\newcommand{\as}{\mbox{-a.s.}}
\newcommand{\eps}{\varepsilon}
\newcommand{\1}{\mathbf{1}}
\DeclareMathOperator*{\argmin}{arg\, min}
\DeclareMathOperator*{\argmax}{arg\, max}
\numberwithin{equation}{section}
\let\originalleft\left
\let\originalright\right
\renewcommand{\left}{\mathopen{}\mathclose\bgroup\originalleft}
\renewcommand{\right}{\aftergroup\egroup\originalright}
\keywords{Martingale Schr\"odinger bridge; semistatic trading}
\subjclass[2010]{91G10; 60G42; 60H05; 94A17; 26B40}
\begin{document}
\title{Martingale Schr\"odinger Bridges and Optimal Semistatic Portfolios}
\date{\today}

\author{Marcel Nutz}
\thanks{MN acknowledges support by an Alfred P.\ Sloan Fellowship and NSF Grants DMS-1812661,  DMS-2106056.}
\address[MN]{Departments of Statistics and Mathematics, Columbia University,  1255 Amsterdam Avenue, New York, NY 10027, USA}
\email{mnutz@columbia.edu}

\author{Johannes Wiesel}
\address[JW]{Department of Statistics, Columbia University, 1255 Amsterdam Avenue,
New York, NY 10027, USA}
\email{johannes.wiesel@columbia.edu}

\author {Long Zhao }
\address[LZ]{Department of Statistics, Columbia University, 1255 Amsterdam Avenue,
New York, NY 10027, USA}
\email{long.zhao@columbia.edu}
\maketitle

\begin{abstract}
	In a two-period financial market where a stock is traded dynamically and European options at maturity are traded statically, we study the so-called martingale Schr\"odinger bridge~$Q_*$; that is, the minimal-entropy martingale measure among all models calibrated to option prices. This minimization is shown to be in duality with an exponential utility maximization over semistatic portfolios. Under a technical condition on the physical measure~$P$, we show that an optimal portfolio exists and provides an explicit solution for~$Q_{*}$. This result overcomes the remarkable issue of non-closedness of semistatic strategies discovered by Acciaio, Larsson and Schachermayer. Specifically, we exhibit a dense subset of calibrated martingale measures with particular properties to show that the portfolio in question has a well-defined and integrable option position.
\end{abstract}

\section{Introduction and Main Results}

The martingale Schr\"odinger bridge was introduced by~\cite{HenryLabordere.19} as a pricing model achieving perfect calibration to all Vanilla options while retaining stylized facts of a reference model. Starting from a reference stochastic volatility model (SVM) which typically cannot be calibrated perfectly, the martingale Schr\"odinger bridge is constructed as the calibrated measure which is closest to the SVM in the sense of relative entropy. In contrast to the classical Schr\"odinger bridge~\cite{Leonard.14} and~\cite{Avellaneda.98,AvellanedaEtAl.01}, this problem features an additional martingale constraint to generate an arbitrage-free model. 
A similar approach is used by~\cite{Guyon.20, Guyon.21} in a two-period setting to solve the longstanding  joint S\&P\,500/VIX smile calibration puzzle; here entropy minimization is utilized to construct a model that is jointly calibrated to the S\&P\,500, VIX futures and VIX options. %

The aforementioned works rest on (sometimes implicit) mathematical assumptions of strong duality and attainment. These are plausible as natural extensions of standard results in markets without option trading (see \cite{DelbaenEtAl.02,Frittelli.00,Schachermayer.01,Zariphopoulou.01}, among others). However, \cite{AcciaioLarssonSchachermayer.17} exhibited a surprising obstacle to obtaining such extensions: the space of semistatic  portfolios of stocks and options is not closed (both in a two-period model and in continuous time). In classical mathematical finance,  closedness results are at the very heart of the separation arguments underlying the Fundamental Theorem of Asset Pricing and the existence of optimal portfolios for utility maximization. As a consequence, it is not obvious how to formulate and prove the desired results. 

The purpose of the present paper is to provide such results, at least in one setting. On the one hand, we prove strong duality between the martingale Schr\"odinger bridge problem and an exponential utility maximization problem over semistatic portfolios. This duality, as well as the existence of the martingale Schr\"odinger bridge itself (primal attainment), is obtained along the lines of classical entropy minimization and Schr\"odinger bridge theory. On the other hand, we prove (under a technical condition) that the dual problem is attained in a natural space of admissible portfolios, and that this dual solution yields the log-density of the martingale Schr\"odinger bridge. We thus derive from first principles the type of implicit condition assumed on the optimal log-density, e.g., in \cite[Theorem~16]{Guyon.21}, and overcome the non-closedness issue discovered in~\cite{AcciaioLarssonSchachermayer.17}. To wit, while in general a convergent sequence of semistatic portfolios may have an undesirable limit with unclear financial interpretation, the specific limit of a utility-maximizing sequence in our problem is shown to be an admissible portfolio.

We consider a two-period model where the price of a stock is modeled by the canonical process~$(X,Y)$ on~$\R^{2}$ under a (physical) reference probability~$P$. Here~$X$ is the stock price at date $t=1$ and $Y$ is the price at the terminal date~$t=2$. In addition, European options $g(Y)$ are liquidly traded at time zero. By the Breeden--Litzenberger formula~\cite{BreedenLitzenberger.78}, the risk-neutral distribution~$\nu$ of~$Y$ can be derived from the prices of call options with arbitrary strikes, and then the arbitrage-free price of a general option~$g(Y)$ is given by the integral~$E^\nu[g]$. 
The martingale Schr\"odinger bridge problem can now be formalized as
\begin{equation} \label{eq: entropy min}
		\inf_{Q \in \cM(\nu)} H(Q|P),
\end{equation}
where $H$ is the relative entropy (or Kullback--Leibler divergence)
\begin{align*}
H(Q| P) := \begin{cases}
						E^Q\left[ \log \frac{dQ}{dP} \right], & Q \ll P \\
						\infty, & Q \not\ll P
					\end{cases}
\end{align*}
and~$\cM(\nu)$ is the set of calibrated equivalent martingale measures,
\begin{align}\label{eq: cMnu}
	\cM(\nu) := \left\{ Q\in \mathcal{P}(\R^{2}) :\ Q \sim P,\ Q^{2}= \nu, \ E^Q [Y |  X] = X \right\}.
\end{align}
Here $\mathcal{P}(\R^{2})$ is the set of probability measures on $\R^{2}$ and $Q^{2}$ denotes the second marginal of~$Q\in \mathcal{P}(\R^{2})$, or equivalently, the distribution of the price~$Y$ under~$Q$. 

We remark in passing that~\eqref{eq: entropy min} relates to the classical (static) Schr\"odinger bridge  problem $\inf_{Q \in \Pi(\mu,\nu)} H(Q|P)$ over the set~$\Pi(\mu,\nu)$ of couplings of two measures~$\mu,\nu$; see \cite{Follmer.88,Leonard.14,Nutz.20} for surveys. In this problem, there is no martingale constraint. On the other hand, \eqref{eq: entropy min} relates to the martingale optimal transport problem $\inf_{Q \in \mathsf{M}(\mu,\nu)} E^{Q}[c]$ which minimizes an integrated cost over the set $\mathsf{M}(\mu,\nu)$ of martingale couplings; see \cite{BeiglbockHenryLaborderePenkner.11,GalichonHenryLabordereTouzi.11,Hobson.98} and the literature thereafter. In that problem, there is no reference measure. The resulting value yields model-independent bounds for the price of the exotic option~$c$ and as a consequence of the linear structure, solutions tend to be degenerate. By contrast, solutions of~\eqref{eq: entropy min} tend to preserve features of the reference model~$P$, as emphasized in~\cite{HenryLabordere.19}.
The classical Schr\"odinger bridge  problem arises from the classical optimal transport problem by entropic regularization as used in the context of Sinkhorn's algorithm~\cite{Cuturi.13, PeyreCuturi.19}. Similarly, entropic regularization of martingale optimal transport leads to the martingale Schr\"odinger bridge, and this was used in~\cite{DeMarchHenryLabordere.19} to develop a version of Sinkhorn's algorithm for martingale optimal transport. See also~\cite{GuoObloj.19} for a related algorithm using a different relaxation.

Returning to our problem~\eqref{eq: entropy min}---for it to be meaningful, we must assume that 
\begin{align}\label{eq: cMf}
\cMf( \nu) := \{ Q \in \cM( \nu): H(Q |  P ) < \infty \} \neq \emptyset;
\end{align}
that is, there exists a calibrated martingale measure with finite entropy. This condition implies the absence of arbitrage in semistatic trading strategies. It implies the usual no-arbitrage condition on the stock alone, but also depends on but also depends on the interplay of~$P$ and~$\nu$. A precise characterization of~\eqref{eq: cMf}, or even just $\cM( \nu)\neq\emptyset$, in terms of trading strategies along the lines of a fundamental theorem of asset pricing~\cite{DalangMortonWillinger.90}, is an interesting open problem. (Like the question studied in the present paper, the answer is not obvious due to the failure of closedness~\cite{AcciaioLarssonSchachermayer.17}.)
We can now state the basic wellposedness result.

\begin{proposition}\label{prop: existence}
The problem~\eqref{eq: entropy min} admits a unique minimizer $Q_{*}\in\cM(\nu)$, called the martingale Schr\"odinger bridge.
\end{proposition}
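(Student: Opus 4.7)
The plan is to follow the direct method of the calculus of variations, using the well-known variational properties of relative entropy, and then to add one extra step to upgrade absolute continuity of the candidate minimizer to equivalence with $P$.

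\textbf{Uniqueness.} I would first dispose of uniqueness. The set $\cM(\nu)$ is convex, since the constraints $Q^{2}=\nu$ and $E^{Q}[Y|X]=X$ are linear in $Q$, and equivalence with $P$ is preserved under convex combinations. The functional $H(\cdot|P)$ is strictly convex on $\{Q:Q\ll P\}$. Hence if $Q_{1},Q_{2}\in\cM(\nu)$ are both minimizers, then $\tfrac12(Q_{1}+Q_{2})\in\cM(\nu)$ has strictly smaller entropy unless $Q_{1}=Q_{2}$.

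\textbf{Existence via a minimizing sequence.} By the assumption $\cMf(\nu)\neq\emptyset$, the infimum $c:=\inf_{Q\in\cM(\nu)}H(Q|P)$ is finite. Pick a minimizing sequence $(Q_{n})\subset\cM(\nu)$. Because $\{Q:H(Q|P)\leq c+1\}$ is uniformly $P$-integrable (de la Vall\'ee--Poussin applied to $x\mapsto x\log x$) and hence relatively weakly sequentially compact, I may extract a subsequence converging weakly to some $Q_{*}\in\cP(\R^{2})$. The lower semicontinuity of $H(\cdot|P)$ then yields $H(Q_{*}|P)\leq c<\infty$, so in particular $Q_{*}\ll P$. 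The marginal constraint $Q_{*}^{2}=\nu$ is immediate from weak convergence. For the martingale constraint, I need to pass $E^{Q_{n}}[(Y-X)\phi(X)]=0$ to the limit for every bounded continuous $\phi$; this requires uniform $Q_{n}$-integrability of $Y$ (which is free from $Q_{n}^{2}=\nu$) and of $X$, which follows from the conditional Jensen bound $|X|\leq E^{Q_{n}}[|Y|\,|\,X]$ combined with uniform integrability of $|Y|$ under the fixed law $\nu$ (alternatively, the Donsker--Varadhan variational formula together with the entropy bound gives uniform exponential moments and delivers the same conclusion more cleanly).

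\textbf{Upgrading $Q_{*}\ll P$ to $Q_{*}\sim P$.} This is the step that is not contained in the standard static Schr\"odinger bridge proof, and I expect it to be the main obstacle. Suppose for contradiction that $P(A)>0$ while $Q_{*}(A)=0$ for some Borel set $A$. Fix $Q_{0}\in\cMf(\nu)$; by definition $Q_{0}\sim P$, so $Q_{0}(A)>0$. For $t\in(0,1)$, the mixture $Q_{t}:=(1-t)Q_{*}+tQ_{0}$ lies in $\cM(\nu)$ (equivalence with $P$ is restored by any positive weight on $Q_{0}$, and the marginal/martingale constraints are convex). Writing $f_{*}=dQ_{*}/dP$, $f_{0}=dQ_{0}/dP$, $f_{t}=(1-t)f_{*}+tf_{0}$ and decomposing the integral $H(Q_{t}|P)=E^{P}[f_{t}\log f_{t}]$ over $\{f_{*}>0\}$ and $\{f_{*}=0\}$, the contribution on $\{f_{*}=0\}$ equals $tQ_{0}(\{f_{*}=0\})\log t + t\,E^{P}[f_{0}\log f_{0}\,\1_{\{f_{*}=0\}}]$. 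Dividing by $t$ and sending $t\downarrow0$, the term $Q_{0}(\{f_{*}=0\})\log t\to-\infty$, so $H(Q_{t}|P)<H(Q_{*}|P)$ for small $t>0$, contradicting minimality. Hence $P(\{f_{*}=0\})=0$, i.e.\ $Q_{*}\sim P$, so $Q_{*}\in\cM(\nu)$ is an admissible minimizer and, by the first paragraph, the unique martingale Schr\"odinger bridge.
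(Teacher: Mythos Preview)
Your proof is correct and follows essentially the same route as the paper. The paper black-boxes the three ingredients you spell out---compactness from an entropy bound, lower semicontinuity of $H(\cdot|P)$, and the convex-combination trick to upgrade $Q_{*}\ll P$ to $Q_{*}\sim P$---by citing Csisz\'ar's general entropy-minimization theorem (stated as Lemma~\ref{lemma: min entropy optimizer}) and applying it to the relaxed set $\widetilde{\cM}(\nu)$ of absolutely continuous calibrated martingale measures; your argument simply unpacks that lemma in place, and your verification that the martingale constraint passes to the weak limit is the content of the paper's Lemma~\ref{lem:compact}.
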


This will essentially follow from standard entropy minimization theory~\cite{Csiszar.75} and properties of $\mathcal{M}(\nu)$ which are variations of results found, e.g., in~\cite{BeiglbockHenryLaborderePenkner.11}.
Proposition~\ref{prop: existence} lacks a more specific description: we expect by (formal) duality that the log-density of~$Q_*$ corresponds to a semistatic portfolio with certain admissibility criteria, and those criteria are crucial for any further analysis of the martingale Schr\"odinger bridge and its computation (as seen, e.g., in~\cite{Guyon.21}). 
Specifically, trading in our market gives rise to a semistatic outcome of the form
$$
  V=h(X)(Y-X) + g(Y),
$$
where $h(X)$ is the number of stocks held over the second period. Stock trading in the first period, starting from a deterministic initial stock price~$X_{0}$, corresponds to a term $h_{0}(X_{0})(X-X_{0})$ which can be absorbed into the functions $h,g$ above and hence will not be represented explicitly.
We write
	\begin{equation} \label{eq: semistatic}
		 \cV= \{V \ \mbox{measurable:}\ V=h(X) (Y-X) + g(Y) \mbox{ for some } h,g:\R\to \R \}.
	\end{equation}
  In order to have a well-defined option price, the function~$g$ needs to be (measurable and) integrable under the pricing measure~$\nu$. We thus set
\begin{equation} \label{eq: V1}
  \cV_{1}:= \{V\in\cV: \ h,g \mbox{ are measurable,}\ g\in L^{1}(\nu), \ E^{\nu}[g]=0 \}
\end{equation}
for those outcomes whose option is available from zero initial capital. 
Finally, we want $h(X) (Y-X)$ to have suitable martingale properties. There is some flexibility here regarding the definition; one natural choice is to require the martingale property under all $Q\in \cMf(\nu)$ (see also Remark~\ref{rk:differentAdm} for another possible choice).
For $V\in \cV_{1}$, this is equivalent to~$V\in L^1(Q)$ for all $Q\in\cMf(\nu)$.
In summary, our set of admissible portfolios (for zero initial capital) is
$$
  \cVadm= \left\{V\in\cV: 
  \begin{array}{c}
  h,g:\R\to \R  \mbox{ are measurable}, \
  E^{\nu}[g]=0, \\
  E^{Q}[h(X)(Y-X)]=0 \mbox{ for all }Q \in \cMf( \nu)
  \end{array}
  \right\}.
$$
We then have the following strong duality between the martingale Schr\"odinger bridge (primal) problem and the dual problem of exponential utility maximization over semistatic portfolios.

\begin{proposition}\label{pr:duality}
Let $u(x)=-e^{-\gamma x}/\gamma$ for some $\gamma>0$. Then
 \begin{equation} \label{eq: duality}
		\frac{1}{\gamma}  \inf_{Q \in \cM( \nu)} H(Q| P)= \sup_{V \in \cVadm} u^{-1}\left( E^P[ u(V) ] \right).
	\end{equation}
\end{proposition}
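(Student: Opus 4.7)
The plan is to prove the two inequalities in~\eqref{eq: duality} separately, treating the $\ge$ bound by a standard Fenchel/Donsker--Varadhan argument and the $\le$ bound by an entropic approximation that sidesteps the non-closedness of semistatic portfolios.

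\textbf{Weak duality ($\ge$).} For any $V=h(X)(Y-X)+g(Y)\in\cVadm$ and any $Q\in\cMf(\nu)$, the definition of $\cVadm$ combined with $Q^{2}=\nu$ gives
\[
E^{Q}[V]=E^{Q}[h(X)(Y-X)]+E^{\nu}[g]=0,
\]
where $V\in L^{1}(Q)$ is the automatic consequence of admissibility noted in the excerpt. The Donsker--Varadhan variational formula applied with $\phi=-\gamma V$ yields $-\gamma E^{Q}[V]\le H(Q|P)+\log E^{P}[e^{-\gamma V}]$, hence $u^{-1}(E^{P}[u(V)])=-\frac{1}{\gamma}\log E^{P}[e^{-\gamma V}]\le \frac{1}{\gamma}H(Q|P)$. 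Taking the supremum over $V$ and infimum over $Q$ gives the $\ge$ direction of~\eqref{eq: duality}. The case $E^{P}[e^{-\gamma V}]=+\infty$ makes the bound vacuous, and a standard truncation handles the application of Donsker--Varadhan to unbounded $\phi$.

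\textbf{Strong duality ($\le$).} Let $Q_{*}$ be the primal minimizer from Proposition~\ref{prop: existence} and write $Z_{*}=dQ_{*}/dP$. If $\log Z_{*}$ were of the form $-\gamma V_{*}+c_{*}$ with $V_{*}\in\cVadm$ and a constant $c_{*}$, the normalization $E^{P}[Z_{*}]=1$ would force $c_{*}=-\log E^{P}[e^{-\gamma V_{*}}]$ and
\[
H(Q_{*}|P)=E^{Q_{*}}[\log Z_{*}]=-\gamma E^{Q_{*}}[V_{*}]+c_{*}=c_{*}
\]
would give exact equality. Non-closedness~\cite{AcciaioLarssonSchachermayer.17} precludes taking such a representation for granted. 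The plan is therefore to approximate in entropy: construct $Q_{n}\in\cMf(\nu)$ with $H(Q_{n}|P)\to H(Q_{*}|P)$ whose densities \emph{do} admit exact semistatic log-densities of the above form. For each $Q_{n}$ the calculation extracts $V_{n}\in\cVadm$ with $u^{-1}(E^{P}[u(V_{n})])=H(Q_{n}|P)/\gamma$, and sending $n\to\infty$ supplies the reverse inequality.

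\textbf{Main obstacle.} The technical core is constructing this approximating sequence---precisely the density of ``nice'' calibrated martingale measures announced in the abstract. The natural construction first exhibits candidates with bounded density and explicit semistatic log-density (say via polynomial truncation or mollification), and then corrects them to exactly enforce both the marginal constraint $Q^{2}=\nu$ and the martingale property without destroying the semistatic structure, possibly via an alternating (Sinkhorn-style) entropic projection between the two constraints. This density argument deliberately avoids identifying any limit $V_{n}\to V_{*}$ as an admissible portfolio; that identification is the content of dual attainment, for which the technical condition on $P$ will be needed, but for~\eqref{eq: duality} itself the entropic approximation alone suffices.
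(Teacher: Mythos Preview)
Your overall strategy matches the paper's: prove weak duality by a variational (Jensen/Donsker--Varadhan) bound using $E^{Q}[V]=0$ for $Q\in\cMf(\nu)$ and $V\in\cVadm$, and prove the reverse inequality by exhibiting measures $Q_{n}$ whose log-densities are \emph{exact} semistatic portfolios and whose entropies converge to $H(Q_{*}|P)$. The weak-duality half is fine and essentially identical to the paper's argument.

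The gap is that you have identified the construction of $(Q_{n})$ as ``the technical core'' but not supplied it, and your sketched route (mollify or truncate a candidate density, then correct it via a Sinkhorn-type alternating projection to restore the marginal and martingale constraints) is both different from the paper's and not obviously workable. A projection that restores $Q^{2}=\nu$ will generically destroy the martingale property and vice versa; there is no reason an alternating scheme should converge in entropy, and even if it did, each step would in general perturb the semistatic form of the log-density rather than preserve it. So as written this is a plan, not a proof.

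The paper's construction is cleaner and avoids any correction step. One writes $\cM(\nu)$ as the set of $Q\sim P$ satisfying countably many linear constraints $E^{Q}[h_{i}(X)(Y-X)]=0$, $E^{Q}[g_{i}(Y)]=0$ with bounded $h_{i},g_{i}$. Truncating to the first $n$ constraints gives a larger set $\cM_{n}(\nu)\supseteq\cM(\nu)$; the entropy minimizer $Q_{n}$ over $\cM_{n}(\nu)$ is a standard finite-dimensional exponential-family problem, and its density automatically has the form $\exp(c_{n}+V_{n})$ with $V_{n}$ a finite linear combination of the $h_{i}(X)(Y-X)$ and $g_{i}(Y)$, hence $V_{n}\in\cVb\subset\cVadm$. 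A general stability result for entropy minimization over decreasing convex sets (here $\cap_{n}\cM_{n}(\nu)=\cM(\nu)$) then gives $H(Q_{n}|P)\to H(Q_{*}|P)$. Note that $Q_{n}\notin\cM(\nu)$ in general, so your stipulation ``construct $Q_{n}\in\cMf(\nu)$'' is stronger than needed; what matters is only that $V_{n}\in\cVadm$ and $E^{P}[e^{-\gamma V_{n}}]\to e^{-H(Q_{*}|P)}$.
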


The duality will be obtained by showing that the log-density of $Q_{*}$ can be approximated by semistatic portfolios with good integrability properties; cf.\ Proposition~\ref{prop: characterisation}. That proposition, in turn, is inspired by seminal results in the theory of (classical) Schr\"odinger bridges, especially F\"ollmer's construction of Schr\"odinger potentials~\cite{Follmer.88}. 
Our argument does not require dual attainment and thus avoids discussing delicate properties of the portfolios: the supremum in~\eqref{eq: duality} would be the same if taken, say, over portfolios $h(X) (Y-X) + g(Y)$ with bounded continuous functions~$h,g$. But of course, this space would not allow for attainment in general.

Turning to the delicate part, we want to show that the dual problem is attained at an admissible portfolio~$V_{*}$ and that this maximizer yields the log-density of $\cQ_{*}$. We denote by $P=P^{1}\otimes P^{\bullet}$ the disintegration of~$P$; that is, $P^{1}$ is the law of $X$ under~$P$ and~$P^\bullet(x,dy)$ is the conditional law of~$Y$ given~$X=x$.

\begin{theorem}\label{thm: integrability}
  Suppose that $dP^\bullet/d\nu$ is $P^1$-a.s.\ uniformly bounded from above and below. 
  Then the minimizer~$Q_{*}$ of~\eqref{eq: entropy min} is given by the density
  \begin{align}\label{eq: density std form}
  Z_*:=\frac{dQ_*}{dP}=e^{H(Q_*| P) + V_*},
  \end{align}	
  where $V_{*}\in\cVadm$ is the unique solution of the dual problem,
  $$
    V_{*} = \argmax_{V\in\cVadm} E^P[ u(V) ] .
  $$
  In particular, $V_{*}=h(X)(Y-X)+g(Y)$, where $h,g$ are measurable functions with $g\in L^{1}(\nu)$ and $E^{\nu}[g]=0$ as well as $h(X)(Y-X)\in L^{1}(Q)$ and $E^{Q}[h(X)(Y-X)]=0$ for all $Q \in \cMf(\nu)$.
\end{theorem}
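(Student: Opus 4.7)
The plan is to identify $V_* := \log Z_* - H(Q_*|P)$ directly, where $Z_* = dQ_*/dP$ is the primal density furnished by Proposition~\ref{prop: existence}, and to show that the boundedness of $dP^\bullet/d\nu$ forces this $V_*$ to split as an admissible semistatic portfolio. Once this identification is achieved, the formula $Z_* = e^{H(Q_*|P)+V_*}$ is tautological, and the argmax characterization is the standard Gibbs-measure/first-order identity: every perturbation $\delta V \in \cVadm$ with sufficient integrability satisfies $E^{Q_*}[\delta V]=0$, which transcribes exactly to the vanishing of the directional derivative of $V \mapsto E^P[u(V)]$ at $V_*$.

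First I would invoke Proposition~\ref{prop: characterisation} to produce a sequence of bounded semistatic portfolios $V_n = h_n(X)(Y-X) + g_n(Y) \in \cVadm$ with $V_n \to V_*$ in $L^1(Q_*)$. Since $Q_* \sim P$ and the two-sided bound on $dP^\bullet/d\nu$ makes $P$ and $P^1 \otimes \nu$ mutually absolutely continuous with bounded density ratio in both directions, after passing to a subsequence one has $V_n \to V_*$ pointwise $(P^1 \otimes \nu)$-a.e.

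The main obstacle is the \emph{separation step}: converting the scalar convergence $V_n \to V_*$ into separate convergence of $h_n$ and $g_n$, which is precisely where the non-closedness phenomenon of \cite{AcciaioLarssonSchachermayer.17} would otherwise block attainment. Since $\nu$ has finite first moment (forced by $\cMf(\nu) \neq \emptyset$ combined with the martingale property), I fix a bounded test function $\varphi$ on $\R$ satisfying $\int \varphi\,d\nu = 0$ and $\int y\,\varphi(y)\,d\nu(y) = 1$; then the identity
$$
\int \varphi(y)\,V_n(x,y)\,d\nu(y) \;=\; h_n(x) + C_n, \qquad C_n := \int \varphi\,g_n\,d\nu,
$$
recovers $h_n(x)$ up to the additive constant $C_n$ from a bounded linear $\nu$-pairing of $V_n(x,\cdot)$. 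The upper bound on $d\nu/dP^\bullet$ converts this pairing into a bounded $P^\bullet(x,\cdot)$-pairing, and combined with the $(P^1 \otimes \nu)$-a.e.\ convergence and a uniform-integrability argument driven by the two-sided density bound, one obtains $P^1$-a.e.\ convergence of $h_n - C_n$ after passing to a further subsequence of the $C_n$; call the limit $h_*$. A symmetric averaging in the $x$-variable applied to the residual $V_n - h_n(X)(Y-X)$, which depends on $Y$ alone, yields $g_n + C_n \to g_*$ in $\nu$-measure, and one reads off $V_* = h_*(X)(Y-X) + g_*(Y)$ $(P^1 \otimes \nu)$-a.e.

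To finish I verify admissibility and then optimality. That $g_* \in L^1(\nu)$ with $E^\nu[g_*] = 0$ follows by passing the uniform $\nu$-integrability of $\{g_n\}$ (controlled via the density bound) through the limit, together with $E^\nu[g_n] = 0$ for every $n$. For any $Q \in \cMf(\nu)$, the required property $V_* \in L^1(Q)$ combines the Donsker--Varadhan bound $E^Q[V_*] \le H(Q|P) + \log E^P[e^{V_*}] = H(Q|P) - H(Q_*|P) < \infty$ with a Fatou/uniform-integrability argument inherited from the bounded approximants; given the decomposition $V_* = h_*(X)(Y-X) + g_*(Y)$ and $g_*(Y) \in L^1(Q)$ (because $Q^2 = \nu$), this yields $h_*(X)(Y-X) \in L^1(Q)$, and then $E^Q[h_*(X)(Y-X)] = 0$ is immediate from $E^Q[Y|X] = X$. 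Uniqueness of $V_*$ as dual maximizer follows from strict concavity of $V \mapsto -\log E^P[e^{-\gamma V}]$ on $\cVadm$, with the shift invariance removed by $E^\nu[g]=0$ and the martingale condition, together with the uniqueness of $Q_*$.
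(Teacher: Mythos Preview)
The central gap is your claim that $g_*\in L^1(\nu)$ follows from ``uniform $\nu$-integrability of $\{g_n\}$ (controlled via the density bound).'' You never justify this uniform integrability, and in fact it is precisely what can fail: the counterexample of~\cite{AcciaioLarssonSchachermayer.17} (cf.\ Section~\ref{sec: dualAttPrelim}) exhibits a sequence $V_n=h_n(X)(Y-X)+g_n(Y)\in\cVb$ converging in $L^p$ whose option legs $g_n$ lose $\nu$-integrability in the limit. The two-sided bound on $dP^\bullet/d\nu$ relates $P$ to $P^1\otimes\nu$, but it gives no control on the sizes of the individual $g_n$: convergence of $V_n$ in $L^1(Q_*)$ allows $h_n(X)(Y-X)$ and $g_n(Y)$ to blow up in opposite directions. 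Your separation step via the test function $\varphi$ recovers $h_n$ up to constants, but passing $\int \varphi(y)V_n(x,y)\,d\nu(y)$ to the limit also requires, for $P^1$-a.e.\ $x$, uniform $\nu$-integrability of $V_n(x,\cdot)$, which you do not have: $L^1(Q_*)$-convergence only gives $L^1(Q_*^\bullet(x))$-control along subsequences, and $dQ_*^\bullet(x)/d\nu$ is not assumed bounded below.

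The paper's proof takes a completely different route that avoids any uniform control on the approximants. After obtaining the decomposition $V_*=h(X)(Y-X)+g(Y)$ with measurable $h,g$ from the companion paper (Lemma~\ref{le:functionalClosedness}), the key input is the a~priori integrability $V_*\in L^1(\tQ)$ for \emph{every} $\tQ\in\tcMf(\nu)$ (Remark~\ref{rmk: psi star int}, inherited from~\eqref{eq:entropyMinIntegrab}). The boundedness hypothesis is used not to control $\{g_n\}$ but to verify the technical condition~\eqref{eq: technical cond}, which then feeds into Proposition~\ref{prop: integrability approx}: an intricate construction (using quantitative stability of the convex order, Lemmas~\ref{lem:convex}--\ref{lem:convex2}) of a measure $\tQ\in\tcMf(\nu)$ whose first marginal is concentrated where $h$ is bounded. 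For this particular $\tQ$ one has $h(X)(Y-X)\in L^1(\tQ)$ trivially, whence $g(Y)=V_*-h(X)(Y-X)\in L^1(\tQ)$, i.e.\ $g\in L^1(\nu)$. Your proposal is missing an idea of this kind; the uniform-integrability assertion is the step that would fail.
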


The boundedness condition in Theorem~\ref{thm: integrability} can be weakened to an integrability condition; see Remark~\ref{rk:relaxBddnessCond}. 
In contrast to the other results, this theorem does not seem to follow from classical arguments. If the space of admissible portfolios were closed, the theorem would follow from the approximation result in Proposition~\ref{prop: characterisation}, broadly as in the classical framework of mathematical finance without options. To overcome the failure of closedness (specifically, of $\cV_{1}$ and $\cVadm$, as shown in~\cite{AcciaioLarssonSchachermayer.17}), we first leverage a result from our companion paper~\cite{NutzWieselZhao.22a}, where it is shown that the functional form of semistatic portfolios is stable under pointwise limits. As a consequence, the approximation result still implies that~$V_{*}$ is of the general form $V_{*}=h(X)(Y-X)+g(Y)$ for some measurable functions~$h,g$.

On the flip side, another insight from~\cite{NutzWieselZhao.22a} is that the key failure in the counterexample of~\cite{AcciaioLarssonSchachermayer.17} is the integrability of the option~$g$  which is in turn crucial to associate a price. Hence, it is not surprising that establishing this integrability occupies the lion's share  of the proof of Theorem~\ref{thm: integrability}; it uses novel arguments and seems to be the first result in this direction. Our line of attack is to construct a measure $\widetilde{Q}$ in (a relaxation of) $\cMf(\nu)$ such that $h(X)(Y-X)$ is $\widetilde{Q}$-integrable; once that is achieved, soft arguments imply that~$g\in L^1(\nu)$. In fact, we establish that such measures are dense: in Proposition~\ref{prop: integrability approx} we show that any $Q\in \cMf(\nu)$ is the limit of calibrated (absolutely continuous) martingale measures~$Q_n$ under which the dynamic trading strategy~$h$
is uniformly bounded a.s.  The proof is intricate and develops, among other things,  explicit stability properties of the convex order, building on ideas from martingale optimal transport~\cite{BeiglbockJuillet.12}. See also Section~\ref{sec: integrability approx} for further comments.

We do not know how far the technical condition on~$P$ in Theorem~\ref{thm: integrability} can be relaxed. However, analogy with the classical Schr\"odinger bridge problem suggests that some condition may be necessary. Indeed, the corresponding question in that setting---without martingale constraint but with two marginal constraints---is to show that the log-density of the Schr\"odinger bridge is of the form $f(x)+g(y)$ and establish the measurability and integrability properties of those ``Schr\"odinger potentials''~$(f,g)$. This problem has a long history (e.g., \cite{Beurling.60}). A series of results revealed that the additive form $f(x)+g(y)$ always holds, but also that the measurability of~$(f,g)$ fails without additional conditions; moreover, even when measurability holds, integrability fails without further conditions (see~\cite{BorweinLewis.92, Csiszar.75, FollmerGantert.97, RuschendorfThomsen.93, RuschendorfThomsen.97}).
The study of Schr\"odinger potentials remains an area of active study (see for instance~\cite{AltschulerNilesWeedStromme.21, DeligiannidisDeBortoliDoucet.21, GigliTamanini.21, NutzWiesel.21, NutzWiesel.22}) that we have benefited from, especially for our companion paper~\cite{NutzWieselZhao.22a}. For the present work, we have not been able to transfer as many of those techniques. 

Regarding potential future work, it seems likely that our line of argument can be extended to show the existence of optimal portfolios for more general utility functions. Generalizations in the structure of the market, for instance also adding options with maturity $t=1$, are relatively straightforward in the general parts whereas replacing our argument for the integrability of the option is nontrivial. In a different direction, one may remember how~\cite{Rogers.94} used existence for exponential utility to show the Fundamental Theorem of Asset Pricing. Of course, that is not immediately applicable here, as we have used~\eqref{eq: cMf} in our proof of existence.

The remainder of the paper has a simple structure: Section~\ref{se:wellposedDuality} derives the wellposedness and duality results (Propositions~\ref{prop: existence} and~\ref{pr:duality}), and Section~\ref{se:dualAtt} provides the proof of dual attainment (Theorem~\ref{thm: integrability}).

\section{Wellposedness and Duality}\label{se:wellposedDuality} %

In this section, we first prove the wellposedness of the martingale Schr\"odinger bridge~$Q_{*}$ (Proposition~\ref{prop: existence}). Then, we prove the duality with exponential utility maximization (Proposition~\ref{pr:duality}) through an approximation of~$Q_{*}$ (Proposition~\ref{prop: characterisation}).

We start by recalling a general result on entropy minimization.

\begin{lemma} \label{lemma: min entropy optimizer}
	Consider a measurable space $(\Omega, \cF)$ and denote by $\cP(\Omega)$ its collection of probability measures. Fix $R \in \cP(\Omega)$, let $ \cQ \subseteq \cP(\Omega)$ be convex and closed in variation, and suppose that $\cQ_{\mathrm{fin}} := \{ Q \in \cQ : H(Q|R) < \infty \} \neq \emptyset.$ Then there exists a unique $Q_* \in \cQ$ such that
		$$
			H(Q_*|R) = \inf_{Q \in \cQ} H(Q|R) \in [0, \infty).
		$$
Moreover, $Q_* \gg Q$ for any $Q \in \cQ_{\mathrm{fin}}$. In particular, if there exists $Q \in \cQ_{\mathrm{fin}}$ with $Q \sim R$, then $Q_* \sim R$. Furthermore, 
\begin{equation}\label{eq:entropyMinIntegrab}
  \log \frac{dQ_*}{dR} \in L^1(Q) \quad \mbox{for all} \quad Q \in \cQ_{\mathrm{fin}}.
\end{equation}
\end{lemma}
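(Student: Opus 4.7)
I would adapt Csisz\'ar's classical I-projection argument~\cite{Csiszar.75} to the abstract convex class $\cQ$. Set $\alpha := \inf_{Q \in \cQ} H(Q|R) \in [0,\infty)$ and pick a minimizing sequence $(Q_{n}) \subseteq \cQ_{\mathrm{fin}}$. Convexity of $\cQ$ gives $(Q_{n}+Q_{m})/2 \in \cQ$, whose entropy is at least $\alpha$. Combined with a strong convexity inequality of the form
\[
  \tfrac{1}{2}\bigl(H(Q_{n}|R) + H(Q_{m}|R)\bigr) - H\bigl(\tfrac{Q_{n}+Q_{m}}{2} \bigm| R\bigr) \,\geq\, c\,\|Q_{n}-Q_{m}\|_{\mathrm{TV}}^{2}
\]
for a universal $c > 0$ (a quantitative version of the strict convexity of $x\log x$), this forces $(Q_{n})$ to be Cauchy in total variation. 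Variation-closedness of $\cQ$ supplies a limit $Q_{*}\in \cQ$, while variation lower-semicontinuity of $H(\cdot|R)$ gives $H(Q_{*}|R) = \alpha$; strict convexity of $H(\cdot|R)$ on its effective domain yields uniqueness.

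\textbf{Absolute continuity and equivalence.} For $Q \in \cQ_{\mathrm{fin}}$, I would consider the convex perturbation $Q_{t} := (1-t)Q_{*} + tQ \in \cQ$ for $t \in [0,1]$ and set $f := dQ_{*}/dR$, $g := dQ/dR$. On $A := \{f>0\}$ the density $f_{t} := (1-t)f+tg$ is smooth in $t$ near $0$, while on $B := \{f=0\}$ the integrand $f_{t}\log f_{t} = tg(\log t + \log g)$ contributes a term $t\log t \cdot Q(B)$ whose derivative diverges to $-\infty$ as $t \to 0^{+}$ whenever $Q(B)>0$. This would force $H(Q_{t}|R) < H(Q_{*}|R)$ for small $t > 0$, contradicting optimality. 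Hence $Q(B)=0$, i.e.\ $Q \ll Q_{*}$. The equivalence claim follows immediately: if some $Q \in \cQ_{\mathrm{fin}}$ satisfies $Q \sim R$, then $R \ll Q \ll Q_{*} \ll R$.

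\textbf{Integrability.} Since $Q \ll Q_{*}$, the chain rule formally gives $H(Q|R) = H(Q|Q_{*}) + E^{Q}[\log f]$. I would then establish Csisz\'ar's Pythagorean inequality $H(Q|R) \geq H(Q_{*}|R) + H(Q|Q_{*})$ by refining the same perturbation: plugging the identity $H(Q_{t}|R) = H(Q_{t}|Q_{*}) + (1-t)H(Q_{*}|R) + t\, E^{Q}[\log f]$ into $H(Q_{t}|R) \geq H(Q_{*}|R)$, dividing by $t$, and passing $t \to 0^{+}$ using $H(Q_{t}|Q_{*})/t \to 0$. This yields $H(Q|Q_{*}) < \infty$, so both $\log(dQ/dR)$ and $\log(dQ/dQ_{*})$ lie in $L^{1}(Q)$ (their negative parts controlled via the elementary bound $x|\log x| \leq x\log x + 2/e$), and hence $\log f = \log(dQ/dR) - \log(dQ/dQ_{*}) \in L^{1}(Q)$ as required.

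The main technical obstacle is the justification of the two limiting steps $t \to 0^{+}$---first that the $B$-contribution dominates in the derivative and forces it to $-\infty$, and second that $H(Q_{t}|Q_{*})/t \to 0$ without an a priori bound on $H(Q|Q_{*})$. Both rely on classical dominated/monotone convergence arguments once the geometric set-up is in place, and are essentially the issues handled in~\cite{Csiszar.75}, to which I would appeal for the remaining routine details.
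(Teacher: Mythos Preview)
Your proposal follows the same classical I-projection approach of Csisz\'ar that the paper invokes; indeed the paper gives no independent proof, only citations to~\cite{Nutz.20} and~\cite{Csiszar.75}. The existence/uniqueness via strong convexity and the absolute-continuity argument are correct.

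The integrability step, however, has a circularity as written. Showing $H(Q_t|Q_*)/t \to 0$ by monotone convergence requires an integrable majorant for $\rho(h_t)/t$ (where $h_t=(1-t)+t\,dQ/dQ_*$ and $\rho(s)=s\log s-s+1$), and the natural candidate $\rho(h_1)=\rho(dQ/dQ_*)$ integrates to $H(Q|Q_*)$---exactly the quantity you are trying to bound. Csisz\'ar's actual argument bypasses this by differentiating $H(Q_t|R)$ directly: with $f=dQ_*/dR$ and $g=dQ/dR$, the quotients $(f_t\log f_t - f\log f)/t$ are nondecreasing in~$t$ by convexity of $s\mapsto s\log s$, with the \emph{integrable} upper bound $g\log g - f\log f$ at $t=1$; monotone convergence then yields $\int (g-f)(1+\log f)\,dR \ge 0$, i.e.\ $E^Q[\log f]\ge H(Q_*|R)$. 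Combined with your Young-type bound on $E^Q[(\log f)^+]$, this gives $\log f\in L^1(Q)$ and then the Pythagorean inequality via the chain rule. Since you already flag this passage as delicate and defer to~\cite{Csiszar.75}, the correction is minor.
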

\begin{proof}
	In the stated form, the result can be found in \cite[Theorem~1.10 and Corollary~1.13]{Nutz.20}. Its main part is very classical; cf.~\cite{Csiszar.75}. The integrability~\eqref{eq:entropyMinIntegrab} is less known but can also be deduced from~\cite{Csiszar.75}.
\end{proof}

Lemma \ref{lemma: min entropy optimizer} is not directly applicable to the set $\cQ = \cM(\nu)$ of martingale measures defined in~\eqref{eq: cMnu} as this set is not closed due to the equivalence constraint. 
Writing $$\Pi(\nu)=\{Q\in\cP(\R^{2}):\, Q^{2}=\nu\},$$ we consider instead the following relaxations defined with absolute continuity,
\begin{align}
	\widetilde{\cM}( \nu) &:= \left\{ Q \in \Pi( \nu): \ Q \ll P, \ E^{Q} [Y| X] = X \right\} \supseteq \cM(\nu), \nonumber \\
	\widetilde{\cM}_{\mathrm{fin}}(\nu) &:= \{ Q \in \widetilde{\cM}( \nu): H(Q |  P ) < \infty \} \supseteq \cMf(\nu)  \label{eq: tilde cMf}
\end{align} 
and argue that $\widetilde{\cM}(\nu)$ satisfies the hypotheses of Lemma~\ref{lemma: min entropy optimizer}. To this end, we first give an extension of \cite[Lemma~2.2 and Theorem~2.4]{BeiglbockHenryLaborderePenkner.11}.
Recall that two measures $\mu,\nu\in\cP(\R)$ are in convex order~\cite{ShakedShanthikumar.07}, denoted $\mu\preceq_c \nu$, if they have finite first moments and
$
E^\mu[f]\le E^\nu[f] 
$
holds for all convex functions $f:\R\to \R$. As an example, $E^{Q} [Y| X] = X$ implies $Q^{1}\preceq_c Q^{2}$ by Jensen's inequality.

\begin{lemma}\label{lem:compact}
The set
$
 \{ Q \in \Pi( \nu): \ E^{Q} [Y| X] = X  \}
$
is weakly closed.
\end{lemma}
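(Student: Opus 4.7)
The plan is to take a weakly convergent sequence $Q_n \to Q$ with each $Q_n$ in the set and show that $Q$ is still in the set. Preservation of the second marginal, $Q^2 = \nu$, is immediate from the continuity of the projection $(x,y) \mapsto y$. If the set is empty there is nothing to prove; otherwise $\nu$ must have a finite first moment, since $Y \in L^1(Q_n)$ for any $Q_n$ in the set, and I assume this henceforth. The real content is the martingale condition $E^Q[Y|X] = X$, which I would rewrite as $E^Q[(Y-X)\phi(X)] = 0$ for all $\phi \in C_b(\R)$ and then extend to bounded measurable $\phi$ by a monotone class argument.

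The obstacle is that $(y-x)\phi(x)$ is unbounded in $(x,y)$, so weak convergence does not directly yield convergence of the relevant integrals. The key observation, which resolves this, is that Jensen's inequality applied to $E^{Q_n}[Y|X] = X$ gives $Q_n^1 \preceq_c Q_n^2 = \nu$ in convex order. Consequently $E^{Q_n}[|X|] \le E^\nu[|Y|]$ uniformly in $n$, and more importantly $\{|X|+|Y|\}$ is uniformly integrable under $\{Q_n\}$: for any $M > 0$ one has the elementary pointwise bound
\[
(|x|+|y|)\1_{\{|x|+|y| > M\}} \le 2(|x| - M/4)^+ + 2(|y| - M/4)^+,
\]
so integrating against $Q_n$ and using convex order on each marginal gives $E^{Q_n}[(|X|+|Y|)\1_{\{|X|+|Y| > M\}}] \le 4\, E^\nu[(|Y| - M/4)^+] \to 0$ as $M \to \infty$, uniformly in $n$.

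With uniform integrability in hand, I would truncate: fix a continuous cutoff $\chi_M \colon \R \to [0,1]$ equal to $1$ on $[-M,M]$ and vanishing outside $[-M-1, M+1]$. Then $f_M(x,y) := (y-x)\phi(x)\chi_M(x)\chi_M(y)$ is bounded continuous, so $E^{Q_n}[f_M] \to E^Q[f_M]$ for each fixed $M$. The uniform integrability above, together with $\|\phi\|_\infty < \infty$, controls the truncation error $\bigl|E^{Q_n}[(Y-X)\phi(X) - f_M]\bigr|$ uniformly in $n$, and the same bound applies to $Q$ (via Fatou, or by noting that $Q^1 \preceq_c \nu$ survives weak convergence). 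Letting $n \to \infty$ and then $M \to \infty$ yields $E^Q[(Y-X)\phi(X)] = 0$, since each $Q_n$ satisfies the martingale condition exactly.

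The main obstacle is the uniform integrability step; once it is in place, the remainder is a routine truncation-and-pass-to-the-limit argument. A minor preliminary point worth checking is that $X \in L^1(Q)$, so that the conditional expectation $E^Q[Y|X]$ makes sense; this follows from Fatou applied to the uniform bound $E^{Q_n}[|X|] \le E^\nu[|Y|]$.
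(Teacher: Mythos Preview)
Your proof is correct and follows essentially the same route as the paper: both establish uniform integrability of $|X|+|Y|$ under $(Q_n)$ by exploiting the convex-order relation $Q_n^1 \preceq_c \nu$ (from Jensen), then pass the martingale condition to the limit. The only cosmetic difference is that the paper invokes the de la Vall\'ee--Poussin theorem to produce a convex superlinear $f$ with $\int f\,d\nu<\infty$ and then bounds $\int f\,dQ_n^1$ via convex order, whereas you achieve the same uniform integrability by the explicit pointwise estimate $(|x|+|y|)\1_{\{|x|+|y|>M\}}\le 2(|x|-M/4)^+ + 2(|y|-M/4)^+$ and the convex function $(|\cdot|-M/4)^+$; your version is slightly more elementary but otherwise equivalent.
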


\begin{proof}
  Let $(Q_n)_{n\geq1}$ be a sequence of measures converging  weakly to some limit~$Q$, then $Q\in\Pi(\nu)$ by the continuity of the projection~$Y$. To see that $E^{Q_{n}} [Y| X] = X$ implies $E^{Q} [Y| X] = X$, we show that $| X| +| Y|$ is $(Q_n)$-uniformly integrable.
Indeed, as $\{\nu\}$ is uniformly integrable,
the la Valle\'e--Poussin theorem yields a convex function $f:\R\to\R_{+}$ of superlinear growth with $\int f\, d\nu<\infty$. Thus
$$
  \sup_{\mu:\,\mu\preceq_c \nu} \int f\, d\mu \leq \int f\, d\nu<\infty
$$
by the definition of the convex order, showing that $\{\mu:\,\mu\preceq_c \nu\}$ is uniformly integrable.  As a result, $| X| +| Y|$ is $\{ Q\in \Pi(\nu): Q^1 \preceq_c \nu\}$-uniformly integrable and in particular $(Q_n)$-uniformly integrable.
\end{proof}

We can now show the wellposedness of the martingale Schr\"odinger bridge~$Q_{*}$.

\begin{proof}[Proof of Proposition \ref{prop: existence}]	
Using Lemma~\ref{lem:compact}, we readily verify that $\widetilde{\cM}(\nu)$ is convex and closed in variation. 
Since $\widetilde{\cM}_{\mathrm{fin}}(\nu) \supseteq \cMf(\nu)\neq\emptyset$ by our assumption~\eqref{eq: cMf}, applying Lemma \ref{lemma: min entropy optimizer} with $\mathcal{Q} = \widetilde{\cM}( \nu)$ yields existence and uniqueness of
\begin{align}\label{eq:relaxedArgmin}
 Q_{*}=\argmin_{Q\in \widetilde{\cM}( \nu)} H(Q| P)
\end{align} 
as well as $Q_* \sim P$; that is, $Q_*\in \cMf(\nu)$. It now follows that $Q_*$ is also the unique minimizer of $\inf_{Q \in \cM(\nu)} H(Q|P)$.
\end{proof}

We record the following observation for use in Section~\ref{se:dualAtt}.

\begin{remark}\label{rmk: psi star int}
For any $Q \in \cMf(\nu)$, a straightforward calculation shows that the density $Z := dQ/dP$ can be written as $Z = e^{H(Q| P) + V}$ for some $V \in L^1(Q)$ with $E^Q[V] = 0$. For the density 
\begin{align}\label{eq: density std form}
  Z_*:=\frac{dQ_*}{dP}=e^{H(Q_*| P) + V_*},
\end{align}	
of the minimizer, we have not only that $E^{Q_{*}}[V_{*}] = 0$ but also that $V_* \in L^1(\widetilde Q)$ for all $\widetilde Q \in \widetilde{\cM}_{\mathrm{fin}}(\nu)$. This follows from~\eqref{eq:entropyMinIntegrab} by way of~\eqref{eq:relaxedArgmin}.
\end{remark}

The next result characterizes the minimizer~$Q_{*}$ through certain approximating sequences of semistatic portfolios and will serve as the basis to prove the duality (Proposition~\ref{pr:duality}). We write $\cVb$ for the set of portfolios $V=h(X) (Y-X)  + g(Y)$ where $h,g: \R\to\R$ are bounded measurable and $E^\nu[ g ] = 0$; clearly $\cVb\subset\cVadm$.

\begin{proposition}\label{prop: characterisation}
	Given $Q_*\in\cMf(\nu)$ with density~\eqref{eq: density std form}, the following statements are equivalent:
	\begin{enumerate}[(i)]
	\item $Q_*$ is the minimizer of \eqref{eq: entropy min}.
	\item There exist probability measures $(Q_n)_{n\geq 1}$ with densities $$Z_n := dQ_n/dP=e^{H(Q_n| P)+ V_n} \quad\mbox{with} \quad V_n \in \cVb$$
	such that 
	$$
	H(Q_n| P)\to H(Q_*| P) \quad \text{and}\quad V_n  \to V_* \text{ in } L^1(Q_*) \quad \text{as } n \to \infty.
	$$
	
	\item There exist $(V_n)_{n\geq 1} \subseteq \cVb$ such that%
	$$E^P \big[ e^{V_n} \big] \to E^P \big[ e^{V_*}\big] \quad \text{as } n \to \infty.$$
	\item There exist $(V_n)_{n\geq 1} \subseteq \cVb$ such that
	$$E^{Q} \big| e^{V_n-V_*} - 1\big|  \to 0 \quad \text{as } n \to \infty.$$
	\end{enumerate}
\end{proposition}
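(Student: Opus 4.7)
The plan is to establish the equivalences among (ii)--(iv) by direct change-of-measure computations, deduce (iii) $\Rightarrow$ (i) from a Donsker--Varadhan bound, and prove the main direction (i) $\Rightarrow$ (iii) by an approximation argument in the spirit of F\"ollmer's Schr\"odinger potentials. Two identities underpin all four conditions. First, from $Z_* = e^{H(Q_*|P) + V_*}$ and $\int Z_* \, dP = 1$ we get $E^P[e^{V_*}] = e^{-H(Q_*|P)}$, so a change of measure yields
\[
E^P[e^{V_n}] \;=\; e^{-H(Q_*|P)}\, E^{Q_*}\!\left[e^{V_n - V_*}\right] \qquad (V_n \in \cVb).
\]
Secondly, $E^Q[V] = 0$ for every $V = h(X)(Y-X) + g(Y) \in \cVb$ and every $Q \in \widetilde{\cM}_{\mathrm{fin}}(\nu)$: the $g(Y)$-term vanishes because $Q^2 = \nu$ and $E^\nu[g]=0$, the $h(X)(Y-X)$-term by the martingale constraint $E^Q[Y|X]=X$, with $Y$ being $Q$-integrable thanks to the convex-order bound from Lemma~\ref{lem:compact}. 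In particular $E^{Q_*}[V_n] = 0$, while Remark~\ref{rmk: psi star int} gives $E^{Q_*}[V_*] = 0$.

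The equivalences (ii) $\Leftrightarrow$ (iii) $\Leftrightarrow$ (iv) are then essentially computational. Since $Z_n = e^{H(Q_n|P) + V_n}$ integrates to $1$ under $P$, we have $H(Q_n|P) = -\log E^P[e^{V_n}]$, making (ii) $\Rightarrow$ (iii) immediate. For (iii) $\Rightarrow$ (iv), set $F_n := e^{V_n - V_*}$: then $E^{Q_*}[\log F_n] = 0$ and the identity above gives $E^{Q_*}[F_n] \to 1$, so with the non-negative convex function $\phi(x) := x - 1 - \log x$,
\[
E^{Q_*}[\phi(F_n)] \;=\; E^{Q_*}[F_n] - 1 \;\to\; 0.
\]
Since $\phi$ is continuous and vanishes only at $x=1$, $F_n \to 1$ in $Q_*$-probability; combined with $E^{Q_*}[F_n] \to 1$, Scheff\'e yields $F_n \to 1$ in $L^1(Q_*)$, which is (iv). For (iv) $\Rightarrow$ (ii), decompose $\phi(F_n) = (F_n - 1) - (V_n - V_*)$: both $\phi(F_n) \to 0$ and $F_n - 1 \to 0$ in $L^1(Q_*)$, so $V_n \to V_*$ in $L^1(Q_*)$, while $H(Q_n|P) \to H(Q_*|P)$ follows by applying $-\log$ to (iii), already a consequence of (iv).

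For (iii) $\Rightarrow$ (i), the Donsker--Varadhan bound and the centering $E^Q[V_n]=0$ give $H(Q|P) \geq -\log E^P[e^{V_n}]$ for every $Q \in \widetilde{\cM}_{\mathrm{fin}}(\nu)$; passing to the limit, $H(Q|P) \geq H(Q_*|P)$ for all such $Q$, and then for all $Q\in \widetilde{\cM}(\nu)$ since the excluded measures have infinite entropy, so $Q_*$ is the unique minimizer by Proposition~\ref{prop: existence}.

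The main obstacle is (i) $\Rightarrow$ (iii), equivalent to the strong duality $H(Q_*|P) = \sup_{V \in \cVb}(-\log E^P[e^V])$. Since $\cVb$ is not closed in any natural topology (cf.~\cite{AcciaioLarssonSchachermayer.17}), a direct minimax or Hahn--Banach closure argument is not available. I would proceed in the spirit of F\"ollmer~\cite{Follmer.88}: construct a sequence of I-projections alternately enforcing the marginal constraint $Q^2 = \nu$ (producing exponential tilts of the form $e^{g(Y)}$) and the martingale constraint $E^Q[Y|X] = X$ (producing tilts of the form $e^{h(X)(Y-X)}$). Each step preserves the exponential form $dQ_n/dP \propto e^{V_n}$ with $V_n$ in the linear span of $\cVb$, and a truncation keeps the tilts bounded so that $V_n \in \cVb$. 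By the I-projection theory of Csisz\'ar~\cite{Csiszar.75}, this iteration contracts toward the I-projection $Q_*$ of $P$ onto $\widetilde{\cM}(\nu)$ in relative entropy, yielding $H(Q_n|P) \to H(Q_*|P)$ and thus (iii). The delicate part will be controlling the truncation errors and verifying the contraction of the Sinkhorn-type iteration under the martingale constraint---essentially all of the subtlety of classical entropic-regularization theory, here complicated by the additional martingale coupling.
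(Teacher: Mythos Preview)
Your treatment of the equivalences (ii)~$\Leftrightarrow$~(iii)~$\Leftrightarrow$~(iv) and of (iii)~$\Rightarrow$~(i) is correct and matches the paper's in spirit; your use of $\phi(x)=x-1-\log x$ to extract convergence in probability before invoking Scheff\'e is in fact a bit more explicit than what the paper writes.

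The genuine gap is (i)~$\Rightarrow$~(iii). Your proposed alternating I-projection scheme is not carried out, and it is not clear that it can be: the martingale constraint $E^Q[Y|X]=X$ is an infinite family of linear constraints, so the I-projection onto that set does not automatically yield a density proportional to $e^{h(X)(Y-X)}$ with $h$ bounded; and convergence of a Sinkhorn-type iteration with a martingale leg and ad hoc truncations would require substantial new work. The paper avoids all of this by a much simpler device. Using separability of $L^1$, one writes
\[
\cM(\nu)=\bigcap_{n\ge 1}\cM_n(\nu),\qquad \cM_n(\nu)=\{Q\sim P:\ E^Q[h_i(X)(Y-X)]=0,\ E^Q[g_i(Y)]=0,\ i\le n\}
\]
for bounded measurable $h_i,g_i$. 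Each $\cM_n(\nu)$ imposes only finitely many linear constraints, so its entropy minimizer $Q_n$ has density $\exp(c_n+\tilde h_n(X)(Y-X)+\tilde g_n(Y))$ with $\tilde h_n,\tilde g_n$ finite linear combinations of the $h_i,g_i$---hence bounded, giving $V_n\in\cVb$ for free. The convergence $H(Q_n|P)\to H(Q_*|P)$ and $\log Z_n\to\log Z_*$ in $L^1(Q_*)$ then follows from a general stability result for entropy minimization over a decreasing sequence of convex sets (\cite[Theorem~1.17]{Nutz.20}), with no need for any iterative projection or truncation argument.
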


This result is inspired by a characterization of (classical) Schr\"odinger bridges in \cite[Proposition~3.6]{FollmerGantert.97}, see also~\cite{Csiszar.75}, and F\"ollmer's construction of Schr\"odinger potentials~\cite{Follmer.88}. The key feature is that the approximating random variables~$V_{n}$ are portfolios and have good integrability properties (whereas the properties of $V_{*}$ are unclear at this stage).

\begin{remark}
The assertion of Proposition~\ref{prop: characterisation} remains valid if~$\cVb$ is replaced by~$\cVadm$ or by~$\cV_{1}$. This will be clear from the proof.
\end{remark}

\begin{proof} [Proof of Proposition \ref{prop: characterisation}]
	$(i) \Rightarrow (ii)$: By separability of $L^1(\R)$ we can write
		\begin{align*}
		\cM( \nu) = \{ Q \sim P : & \ E^Q[  h_i(X) (Y-X) ] = 0, \ E^Q[g_i(Y)] = 0, \ i = 1,2,\dots \}
		\end{align*}
		for a countable collection of bounded measurable functions $h_i,g_i : \R \to \R$. Denote the set of measures for which only the first $n$ constraints are enforced by 
		\begin{align*}
		\cM_n(\nu)= \{ Q \sim P : & \ E^Q[  h_i(X) (Y-X) ] = 0, \ E^Q[g_i(Y)] = 0, \ i = 1,2,\dots,n \}.
		\end{align*}				
		Clearly $\cM(\nu)\subseteq \cM_n(\nu)$ and $\cM_n(\nu)$ is convex and closed in variation. Consider the problem $\inf_{Q\in \cM_n(\nu)} H(Q| P)$. This minimization problem over measures with finitely many linear constraints is well known to be in duality with exponential utility maximization over (static) trading in the finitely many assets $h_i(X) (Y-X), g_i(Y)$ defining the constraints. Specifically, by \cite[Section~3, esp.\ Corollary~3.25]{FollmerSchied.11}, the minimizer $Q_n$ 
		of $\inf_{Q\in \cM_n(\nu)} H(Q| P)$ is of the form
		$$
			Z_n := \frac{dQ_n}{dP} = \exp \left( c_n + \tilde{h}_n(X) (Y-X) + \tilde{g}_n(Y) \right)
		$$
		for some $c_n \in \R$, where $\tilde{h}_n(X) = \sum_{i=1}^{n} a_{i,n} h_{i}(X)$ and $\tilde{g}_n(Y) =\sum_{i=1}^{n} b_{i,n}g_i(Y)$ for some  $a_{i,n}, b_{i,n}, 				\in \R$. As $Q_n\in \cM_n(\nu)$ we have  $c_n = H(Q_n| P)$. That is, $\log Z_n$ is of 
		the form $H(Q_n| P) + V_n$ for some $V_n \in \cVb$. Applying \cite[Theorem 1.17]{Nutz.20} to the sets $\cQ_n := \cM_n( \nu)$ and $\cQ:=\cM(\nu)$ satisfying $\cap_{n}\cQ_n=\cQ$, and
		recalling 
		that $\cM_\mathrm{fin}(\nu)\neq \emptyset$ by our assumption~\eqref{eq: cMf}, we conclude that
		\begin{align*}
		H(Q_*| Q_n)\to 0, \quad H(Q_n| P)\to H(Q_*| P) \quad \text{and}\quad \log Z_n \to \log Z_* \text{  in }L^1(Q_*).
		\end{align*}
		In particular, $V_n \to V_*$ in $L^1(Q_*)$ follows.\\
	$(ii) \Rightarrow (iii)$: Since $Z_n, Z$ are probability densities, we have  $e^{-H(Q_n| P)}= E^P[e^{V_n}]$ and $e^{-H(Q_*| P)}=E^P[e^{V_*}]$. Thus $H(Q_n| P) \to H(Q_*| P)$ is equivalent to $E^P[ e^{V_n}] \to E^P[ e^{V_*}]$. \\
	$(iii) \Leftrightarrow (iv):$ By a change of measure, (iii) is equivalent to
	\begin{align*}
	E^{Q_*} [e^{V_n - V_*}] = e^{H(Q_*| P)} E^P[e^{V_n}]\to e^{H(Q_*| P)}  E^P[e^{V_*}]=1,
	\end{align*}
	and now Scheff\'e's lemma yields the equivalence with~(iv).\\
	$(iii) \Rightarrow (i)$: Without loss of generality, we assume $E^P [ e^{V_n} ] < \infty$ for all $n$. Define probability measures $Q_n$ by
		$$
		Z_n := \frac{dQ_n}{dP} =  e^{H(Q_n| P) + V_n}
		$$
		and recall that (iii) is equivalent to $H(Q_n| P)\to H(Q_*| P)$.
		Take any $Q \in \cMf(\nu)$. Using the definition of $H(\cdot | P)$ and Lemma~\ref{lemma: integrability} below,
		$$
		H(Q| P) - H(Q| Q_n) = E^{Q}\left[ \log Z_n \right] = H(Q_n| P) + E^{Q} [V_n] = H(Q_n| P).
		$$
		As $H(Q| Q_n) \geq 0$, it follows that
		$$
		H(Q| P) \geq \lim_{n \to \infty} H(Q_n| P) = H(Q_*| P).
		$$
		Since $Q \in \cMf(\nu)$ was arbitrary, we conclude that $Q_*$ is the minimizer of \eqref{eq: entropy min}.
\end{proof}

The following technical result was used in the preceding proof.

\begin{lemma} \label{lemma: integrability}
	Let $V\in \cV_1$ satisfy $E^P[e^V]<\infty$. Then $V \in L^1(Q)$ and $E^{Q}[V] = 0$ for all $Q \in \cM_\mathrm{fin}(\nu)$. 
\end{lemma}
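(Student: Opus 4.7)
The plan is to first isolate the option part of $V$, then control $V^+$ via an entropy/Young inequality, and finally obtain the integrability and mean-zero property of the stock part by a sign-decomposition argument that invokes the martingale property only through nonnegative $\sigma(X)$-measurable factors.

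Write $V=h(X)(Y-X)+g(Y)$ with $g\in L^{1}(\nu)$ and $E^{\nu}[g]=0$. Since $Q\in\cMf(\nu)$ has $Q^{2}=\nu$, immediately $g(Y)\in L^{1}(Q)$ and $E^{Q}[g(Y)]=0$. Moreover $\cMf(\nu)\neq\emptyset$ forces $\nu$ to have a finite first moment (any martingale measure requires $Y\in L^{1}(Q)$), so $X=E^{Q}[Y\mid X]$ yields $X,Y\in L^{1}(Q)$ and in particular $Y-X\in L^{1}(Q)$. It therefore suffices to prove that $Z:=h(X)(Y-X)$ lies in $L^{1}(Q)$ with $E^{Q}[Z]=0$.

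Next I would bound $V^{+}$ in $L^{1}(Q)$ using Young's inequality $xy\le x\log x-x+e^{y}$ with $x=dQ/dP$ and $y=V^{+}$. Integrating against $P$ gives
$$ E^{Q}[V^{+}]\le H(Q| P)-1+E^{P}[e^{V^{+}}]\le H(Q| P)+E^{P}[e^{V}]<\infty, $$
where the last step uses $e^{V^{+}}\le 1+e^{V}$ together with the hypotheses $H(Q| P)<\infty$ and $E^{P}[e^{V}]<\infty$. Because $Z=V-g(Y)$, the pointwise bound $Z^{+}\le V^{+}+g(Y)^{-}$ then shows $Z^{+}\in L^{1}(Q)$.

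To upgrade this to $Z\in L^{1}(Q)$ with $E^{Q}[Z]=0$, I would split according to the sign of $h(X)$. Let $A=\{h(X)\ge 0\}$ and $B=\{h(X)<0\}$, both $\sigma(X)$-measurable. A direct case check gives
\begin{align*}
Z^{+} &= h(X)\mathbf{1}_{A}(Y-X)^{+}+| h(X)|\mathbf{1}_{B}(Y-X)^{-},\\
Z^{-} &= h(X)\mathbf{1}_{A}(Y-X)^{-}+| h(X)|\mathbf{1}_{B}(Y-X)^{+}.
\end{align*}
Since $Y-X\in L^{1}(Q)$, the martingale property $E^{Q}[Y-X\mid X]=0$ is equivalent to $E^{Q}[(Y-X)^{+}\mid X]=E^{Q}[(Y-X)^{-}\mid X]<\infty$ $Q$-a.s. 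Multiplying by the nonnegative $\sigma(X)$-measurable factor $h(X)\mathbf{1}_{A}$ (resp.\ $| h(X)|\mathbf{1}_{B}$) and taking expectations — which is unambiguous for nonnegative integrands — yields
$$ E^{Q}[h(X)\mathbf{1}_{A}(Y-X)^{+}] = E^{Q}[h(X)\mathbf{1}_{A}(Y-X)^{-}], $$
and analogously on $B$. Since $E^{Q}[Z^{+}]<\infty$, each of the two summands in the decomposition of $Z^{+}$ is finite; by the two identities above, the corresponding summands of $Z^{-}$ are also finite and equal them. Hence $Z^{-}\in L^{1}(Q)$ and $E^{Q}[Z^{+}]=E^{Q}[Z^{-}]$, so $Z\in L^{1}(Q)$ and $E^{Q}[Z]=0$. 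Adding back $g(Y)$ concludes $V\in L^{1}(Q)$ with $E^{Q}[V]=0$.

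The only subtle point is that $h(X)(Y-X)$ need not be $Q$-integrable a priori, so one cannot blindly invoke the martingale property on its unconditional expectation; the sign decomposition circumvents this by reducing everything to nonnegative integrands, where the tower property holds without any integrability assumption.
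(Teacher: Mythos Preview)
Your proof is correct and follows essentially the same two-stage structure as the paper: first establish $V^{+}\in L^{1}(Q)$ (hence $(h(X)(Y-X))^{+}\in L^{1}(Q)$), then upgrade to full integrability and mean zero via the martingale property. The paper obtains the first stage by introducing the auxiliary density $Z'=e^{c+V}$ and applying $\log x\le x-1$ to $Z'/Z$, which yields $\log Z'\le \log Z + Z'/Z -1\in L^{1}(Q)$; your Young-inequality bound $E^{Q}[V^{+}]\le H(Q|P)+E^{P}[e^{V}]$ is an equivalent (and arguably cleaner) route to the same conclusion.

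The one genuine difference is the second stage: the paper simply cites \cite[Theorem~2b]{JacodShiryaev.98} for the fact that a martingale transform with integrable positive part is integrable with mean zero, whereas you supply a direct elementary proof via the sign decomposition of $h(X)$ and the identity $E^{Q}[(Y-X)^{+}\mid X]=E^{Q}[(Y-X)^{-}\mid X]$. Your argument is precisely the content of that Jacod--Shiryaev result specialized to the one-step setting, so the paper's citation and your explicit argument are interchangeable; your version has the advantage of being self-contained.
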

\begin{proof}
	Define an auxiliary probability measure $Q'$ via
	\begin{align*}
	Z':= \frac{dQ'}{dP} =  e^{c + V},
	\end{align*}	
	where $c\in \R$ is the normalization constant. Moreover, let $Q \in \cM_\mathrm{fin}(\nu)$ and denote by~$Z$ its density.  
 Applying the inequality $\log x \leq x -1$ to $x = z'/z > 0$ yields $\log z' \leq \log z + z'/z - 1$ and hence
	$$
		\log Z' \leq \log Z +Z'/Z - 1 \quad \text{on} \quad \{ Z > 0 \},
	$$
	where $\log 0 := - \infty$.  In view of $H(Q| P) < \infty$, we have $\log Z + Z'/ Z - 1 \in L^1(Q)$ and conclude that $(\log Z')^{+} \in L^{1}(Q)$.
  By the definition of $\cV_1$,
  $$\log Z' = c + V = h(X)(Y-X)+ g(Y)$$
  for some $g \in L^1(\nu)$ with $E^\nu[ g] = 0$, and hence $(\log Z')^{+} \in L^{1}(Q)$ translates to the positive part of the martingale transform $h(X) (Y-X)$ being $Q$-integrable. As~$Q$ is a martingale measure, this already implies (see \cite[Theorem~2b]{JacodShiryaev.98}) that $h(X)(Y-X) \in L^{1}(Q)$ and $E^{Q}[h(X)(Y-X)]=0$. The claim follows.
\end{proof}

We are now in a position to prove the duality result.

\begin{proof}[Proof of Proposition~\ref{pr:duality}]
Let $Q_*$ be the minimizer from Proposition~\ref{prop: existence} and recall from \eqref{eq: density std form} the notation $Z_* = dQ_*/dP = e^{ H(Q_*| P) + V_*}$ where $E^{Q_*}[ V_*]=0$. Let $u(x)=-e^{-\gamma x}/\gamma$ for some $\gamma>0$.
  A change of measure and Jensen's inequality yield that for any $V\in \cVadm$,
	\begin{align*}
	    E^P [u(V)] &= E^{Q_*} \left[ Z_*^{-1} u (V) \right]  = -\frac{1}{\gamma}  E^{Q_*} \left[ e^{-H(Q_*| P) -V_* - \gamma V} \right] \\
		& \leq -   \frac{1}{\gamma} e^{ -H(Q_*| P) -E^{Q_*}[ V_*] - \gamma E^{Q_*}[V] }=  - \frac{1}{\gamma}  e^{ -H(Q_*| P)},
	\end{align*}
	where the equality used that $E^{Q_*}[V]= 0$ due to $Q_*\in \cMf(\nu)$ and $V\in \cVadm$.

	On the other hand, Proposition \ref{prop: characterisation}\,(iv) shows that there exist $( V_n) \subseteq \cVb$ such that $E^{Q_*} \left[  e^{ - V_* -\gamma V_n } \right] \to 1$ and consequently $-  E^{Q_*} \left[  e^{ -H(Q_*| P) -V_* -\gamma V_n} \right] \to -e^{-H(Q_*| P)}$.
	In view of $\cVb\subseteq \cVadm$, this yields $\sup_{V \in \cVadm} E^{P}[u(V)] \geq -  \frac{1}{\gamma}e^{ H(Q_*| P)}$.
	Lastly,
	$$
		 \inf_{Q \in \cM( \nu)} u\left(\frac{1}{\gamma} H(Q| P)\right) = u\left(\frac{1}{\gamma}H(Q_*| P)\right)  = - \frac{1}{\gamma} e^{ - H(Q_*| P)},
	$$
	so that combining the two inequalities yields
	\begin{align*}
	\sup_{V \in \cVadm} E^{P}[u(V)] =\inf_{Q \in \cM( \nu)} u\left(\frac{1}{\gamma} H(Q| P)\right)
	\end{align*}
	as claimed.
	\end{proof}

\begin{remark}\label{rk:differentAdm}
  By the proof, the duality~\eqref{eq: duality} still holds if the supremum is taken over the larger set $\cV_1 \cap L^1(Q_*) \supset \cVadm$, providing an alternative definition of admissibility.
\end{remark}

\section{Admissibility and Dual Attainment}\label{se:dualAtt}

\subsection{Preliminary Considerations}\label{sec: dualAttPrelim}

Let $Q_*$ be the minimizer from Proposition~\ref{prop: existence} and recall from~\eqref{eq: density std form} the notation $Z_* = dQ_*/dP = e^{ H(Q_*| P) + V_*}$. With a view towards the duality relation, note that
	$$
		E^P\left[u\left(-\frac{1}{\gamma}V_*\right)\right] = - E^P\left[ \frac{1}{\gamma} e^{ V_*} \right] = - E^P\left[ \frac{1}{\gamma} e^{ -H(Q_*| P)} Z_*\right] = -\frac{1}{\gamma}  e^{ -H(Q_*| P)}.
	$$
	It is thus tempting to conclude that $-V_*/\gamma$ ``attains" the supremum in \eqref{eq: duality}. However, it far from obvious whether $V_*$ belongs to the dual domain $\cVadm$ (or is a portfolio in any sense). At this stage, we know that $E^{Q_*}[ V_*]=0$ and that $V_{*}$ is the limit of certain portfolios $V_{n}\in \cVb \subset \cVadm\subset \cV_{1}$; cf.~Proposition~\ref{prop: characterisation}. The missing conclusion would be obvious if any of these spaces had a good closure property. However, as mentioned in the Introduction, \cite{AcciaioLarssonSchachermayer.17} has shown that this is not the case: specifically, the authors exhibit a two-period model and an $L^{p}$-convergent sequence $V_{n}\in \cVb$ whose limit is outside~$\cV_{1}$. The proof uses a clever contradiction argument avoiding a detailed study of the limiting random variable, and so it may not be clear what exactly goes wrong in the limit.
	
The first possible issue is whether the limit still has the functional form $h(X)(Y-X) + g(Y)$ for some functions $h,g$. A second issue is whether (these functions are measurable and) $g$~is integrable as required by the definition of~$\cV_{1}$. The first issue is analyzed in our companion paper~\cite{NutzWieselZhao.22a} which shows that the functional form is stable even under pointwise limits. Under the mild condition that~$P\sim P^{1}\otimes P^{2}$  (which is implied by the condition in Theorem~\ref{thm: integrability}), we can also guarantee that $h,g$ remain measurable.

\begin{lemma}\label{le:functionalClosedness}
  We have $V_{*}\in\cV$; that is, $V_{*}=h(X)(Y-X) + g(Y)$ for some functions $h,g:\R\to\R$. If $P\sim P^{1}\otimes P^{2}$, the functions $h,g$ are a.s.\ uniquely determined and measurable.
\end{lemma}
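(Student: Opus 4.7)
The plan is to transfer the functional form from a pointwise approximation and then exploit the product structure of~$P$ for uniqueness and measurability.

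\textbf{Functional form.} By Proposition~\ref{prop: characterisation}, there exists a sequence $(V_{n})\subset\cVb$ with $V_{n}=h_{n}(X)(Y-X)+g_{n}(Y)$ for bounded measurable $h_{n},g_{n}$ and $V_{n}\to V_{*}$ in $L^{1}(Q_{*})$. Passing to a subsequence, we may assume $V_{n}\to V_{*}$ $Q_{*}$-a.s.; since $Q_{*}\sim P$ (Proposition~\ref{prop: existence}), this is also $P$-a.s.\ convergence. The stability result from the companion paper~\cite{NutzWieselZhao.22a} on pointwise limits of semistatic payoffs then yields functions $h,g:\R\to\R$ with $V_{*}=h(X)(Y-X)+g(Y)$ $P$-a.s., giving $V_{*}\in\cV$.

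\textbf{Uniqueness.} Now assume $P\sim P^{1}\otimes P^{2}$. Suppose $h_{1}(X)(Y-X)+g_{1}(Y)=h_{2}(X)(Y-X)+g_{2}(Y)$ holds $P$-a.s., hence $P^{1}\otimes P^{2}$-a.s. Setting $\Delta h:=h_{1}-h_{2}$ and $\Delta g:=g_{2}-g_{1}$, Fubini yields a set $N\subset\R$ with $P^{1}(N)=0$ such that for every $x\notin N$ one has $\Delta h(x)(y-x)=\Delta g(y)$ for $P^{2}$-a.e.~$y$. Fixing any two distinct points $y_{1},y_{2}$ in the support of $P^{2}$ (which is nondegenerate as otherwise the problem is trivial), subtracting the two relations determines $\Delta h(x)=[\Delta g(y_{1})-\Delta g(y_{2})]/(y_{1}-y_{2})$, a constant. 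Substituting back forces $\Delta g$ to be affine, say $\Delta g(y)=a+by$, and matching coefficients against $\Delta h(x)(y-x)$ gives $b=\Delta h(x)$ and $a=-x\Delta h(x)$ for $P^{1}$-a.e.~$x$. Unless $\Delta h\equiv0$ (and then $\Delta g\equiv0$) $P^{1}$-a.s., this forces $x$ to be constant $P^{1}$-a.s., again a degenerate case that can be excluded. Hence $h_{1}=h_{2}$ $P^{1}$-a.s.\ and $g_{1}=g_{2}$ $P^{2}$-a.s., proving uniqueness.

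\textbf{Measurability.} To select measurable representatives, fix $y_{1},y_{2}$ in the support of $P^{2}$ with $y_{1}\neq y_{2}$ and let $B\subset\R^{2}$ be a $P$-full measure Borel set on which $V_{*}=h(X)(Y-X)+g(Y)$ and on which $V_{*}$ is jointly Borel (replacing $V_{*}$ by a Borel version if needed). By Fubini, $P^{1}$-a.e.\ $x$ admits values of $y$ arbitrarily close to both $y_{1}$ and $y_{2}$ with $(x,y)\in B$; evaluating the identity at $(x,y_{1})$ and $(x,y_{2})$ and subtracting yields
\[
  h(x)=\frac{V_{*}(x,y_{1})-V_{*}(x,y_{2})-(g(y_{1})-g(y_{2}))}{y_{1}-y_{2}}\quad\text{for }P^{1}\text{-a.e.\ }x,
\]
so $h$ has a Borel measurable version; $g(y)=V_{*}(x_{0},y)-h(x_{0})(y-x_{0})$ for a suitable fixed $x_{0}$ then exhibits $g$ as measurable $P^{2}$-a.s. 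The main (and only) obstacle in this argument is ensuring that the pointwise stability statement of~\cite{NutzWieselZhao.22a} applies to our $P$-a.s.\ limit---which it does, since we only needed almost sure convergence along a subsequence---after which the product structure handles uniqueness and measurability by elementary manipulation.
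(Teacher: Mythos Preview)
Your argument for the functional form is the same as the paper's: obtain $P$-a.s.\ convergence of $V_n\to V_*$ from Proposition~\ref{prop: characterisation} and invoke the pointwise stability result of~\cite{NutzWieselZhao.22a}. For uniqueness and measurability, the paper simply cites \cite[Theorem~3.1]{NutzWieselZhao.22a}, whereas you attempt a direct argument. That is a reasonable instinct, but the execution has a genuine gap.

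The problem is your repeated use of point evaluations at fixed $y_1,y_2$. The identity $V_*(x,y)=h(x)(y-x)+g(y)$ holds only on a $P$-full set~$B$; under $P\sim P^1\otimes P^2$, Fubini gives you, for $P^1$-a.e.~$x$, a $P^2$-full slice $B_x\subset\R$---but nothing forces the \emph{fixed} points $y_1,y_2$ to lie in~$B_x$. Your remark that one can find points ``arbitrarily close'' to $y_1,y_2$ in $B_x$ does not help: $V_*$ has no continuity, and if you replace $y_1,y_2$ by $x$-dependent nearby points the resulting selections need not be measurable in~$x$ (and the ``constants'' $g(y_1)-g(y_2)$ become $x$-dependent too). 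The same issue already appears in your uniqueness step when you ``subtract the two relations'' at $y_1,y_2$; there it is easily repaired by comparing two good $x$-values instead (two affine functions equal to $\Delta g$ on a common $P^2$-full set must coincide if $P^2$ is nondegenerate), but for measurability the repair is less trivial.

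A route that does work is to replace point evaluations by averages: choose disjoint intervals $I_1,I_2$ with $P^2(I_j)>0$ and distinct conditional means $m_j=E^{P^2}[Y\mid Y\in I_j]$, fix one good $x_0$, and observe that for $P^1$-a.e.~$x$ the function $y\mapsto V_*(x,y)-V_*(x_0,y)$ is $P^2$-a.e.\ affine with slope $h(x)-h(x_0)$. Averaging over $I_1$ and $I_2$ expresses $h(x)-h(x_0)$ as a quotient of Borel functions of~$x$ (by Fubini), yielding a measurable version of~$h$; then $g(y)=V_*(x_0,y)-h(x_0)(y-x_0)$ gives a measurable version of~$g$. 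This is presumably close to what \cite[Theorem~3.1]{NutzWieselZhao.22a} does; in any case, the paper's proof of the present lemma is a two-line citation and does not reproduce such details.
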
 

\begin{proof}
  By Proposition~\ref{prop: characterisation} we can find $V_{n}\in\cVb$ with $V_{n}\to V_{*}$ $P$-a.s. The two claims then follow from~\cite[Theorem~2.2]{NutzWieselZhao.22a} and~\cite[Theorem~3.1]{NutzWieselZhao.22a}, respectively. 
\end{proof} 

This stability of the functional form indicates that the key failure in the counterexample of~\cite{AcciaioLarssonSchachermayer.17} is the integrability of the option. It is then clear that some original arguments will be required to obtain that the option position in our specific limit~$V_{*}$ is nevertheless integrable---which motivates the rest of this section.

\subsection{Proof of Theorem~\ref{thm: integrability}}\label{sec: dualAttProof}

Recall that the disintegration of a probability measure~$R \in \cP(\R^2)$ is denoted $R=R^{1}\otimes R^\bullet$, where $R^{1}$ is the first marginal (distribution of $X$) and $R^\bullet: \R \to \cP(\R)$ is a stochastic kernel (conditional distribution of~$Y$ given~$X$). We interchangeably use $R^\bullet(x)$, $R^\bullet(x,\cdot)$ or $R^\bullet(x,dy)$ to denote the conditional distribution given~$X=x$.

Our basic line of attack is simple (yet seems to be novel): recalling Remark~\ref{rmk: psi star int}, 
\begin{equation}\label{eq:entropyFinIntegrab}
  V_{*}=h(X)(Y-X)+ g(Y) \in L^1(Q) \quad \mbox{for all} \quad Q \in \tcMf(\nu),
\end{equation}
where $\tcMf(\nu)$ was defined in~\eqref{eq: tilde cMf}. We shall construct $\tQ \in \tcMf(\nu)$ such that $h$ is uniformly bounded $\tQ^1$-a.s. Then clearly $h(X)(Y-X)\in L^{1}(\tQ)$ and now ~\eqref{eq:entropyFinIntegrab} yields  $g(Y)\in L^{1}(\tQ)$, or equivalently $g\in L^{1}(\nu)$, as desired.

On the other hand, the construction of~$\tQ$ is somewhat intricate. It is based on an approximation of~$Q_*$ by a sequence of probability measures $(\tQ_n)$, which simultaneously retain the martingale property and satisfy $h$ is $\tQ_n^1$-a.s.\ uniformly bounded for all $n\in \N$. 
Next, we state a general version of this approximation result, applicable to any measurable function $h:\R\to\R$ and any $Q \in \cMf(\nu)$ satisfying the technical condition~\eqref{eq: technical cond} below. In the proof of Theorem \ref{thm: integrability}, the result will be applied to the specific function~$h$ in $V_*=h(X)(Y-X)+g(Y)$ and $Q = Q_*$.

	\begin{proposition}\label{prop: integrability approx}
	Let $h:\R\to\R$ be measurable and $Q = Q^1 \otimes Q^\bullet \in \cMf(\nu)$.
Suppose that there exists a $Q^1$-integrable function $I : \R \to [0,\infty)$ such that 
\begin{align} \label{eq: technical cond}
	 H \left(Q^\bullet(x')| P^\bullet(x) \right) \leq I(x') \qquad \mbox{for }(Q^1\otimes Q^1)\mbox{-a.a.} \  (x,x').
\end{align}
Then there exist measures $\tQ_n = \tQ_n^1 \otimes \tQ_n^\bullet \in \tcMf(\nu)$ such that 
	\begin{enumerate}
	\item $h$ is $\tQ_n^1$-a.s.\ uniformly bounded for each $n\in \N$,
	\item $\tQ_n\to Q$ in variation,
	\item $H(\tQ_n| P) \to H(Q| P)$.
	\end{enumerate}
	In particular there exists $\tQ \in \tcMf(\nu)$ such that $h$ is uniformly bounded $\tQ^1$-a.s. 
\end{proposition}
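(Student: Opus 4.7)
The natural truncation is $A_n := \{x : |h(x)| \le n\}$, which satisfies $Q^1(A_n) \to 1$ since $h$ is real-valued. I would construct $\tQ_n$ so that $\tQ_n^1$ is supported in $A_n$ while preserving the marginal $\nu$ and the martingale property, and maintaining $\tQ_n \ll P$ with controlled entropy.

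Write $Q = Q|_{A_n \times \R} + Q|_{B_n \times \R}$ with $B_n := \R \setminus A_n$ and $\lambda_n := Q^1(A_n)$. The plan is to keep the first piece and replace the second by a sub-measure $R_n$ on $A_n \times \R$, setting $\tQ_n := Q|_{A_n \times \R} + R_n$. For $\tQ_n^2 = \nu$ and for the martingale property to hold on $\tQ_n$, the sub-measure $R_n$ must be a sub-martingale with second marginal equal to the discarded $\theta_n := \int_{B_n} Q^\bullet(x)\, dQ^1(x)$ and with first marginal $\tau_n$ supported in $A_n$. By the martingale property of $Q$, $\theta_n$ has mass $1-\lambda_n$ and mean $m_n := \int_{B_n} x\, dQ^1(x)$, so Strassen's theorem produces the required martingale kernel from $\tau_n$ to $\theta_n$ provided $\tau_n$ has the same mass and mean and $\tau_n \preceq_c \theta_n$.

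The hard part is to choose $\tau_n$ and the martingale kernel $R_n^\bullet$ so that $R_n \ll P$ and $H(R_n | P) \to 0$. The naive choice $\tau_n = (1-\lambda_n)\delta_{m_n/(1-\lambda_n)}$ respects convex order but generally violates $\tau_n \ll P^1$, so I would instead spread $\tau_n$ over two subsets of $A_n$ straddling $m_n/(1-\lambda_n)$, taking $\tau_n$ proportional to $Q^1$ on each piece (thus automatically $\tau_n \ll Q^1 \ll P^1$) with weights chosen to match the mean. The kernel $R_n^\bullet(x')$ would then be a barycentric rearrangement built from $Q^\bullet$ evaluated on $B_n$, whose entropy relative to $P^\bullet(x')$ is controlled via convexity of $H(\cdot \,|\, P^\bullet(x'))$ together with the technical bound $H(Q^\bullet(y) \,|\, P^\bullet(x)) \le I(y)$. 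Since $Q^1(B_n) \to 0$ and $I$ is $Q^1$-integrable, this yields $H(R_n | P) \to 0$.

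Given this construction, the three required properties follow quickly: $\tQ_n^1$ is supported in $A_n$, so $|h| \le n$ $\tQ_n^1$-a.s.; $\|\tQ_n - Q\|_{\mathrm{TV}} \le 2(1-\lambda_n) \to 0$; and $H(\tQ_n | P) \to H(Q | P)$ via additivity on the $A_n/B_n$ decomposition and the vanishing of $H(R_n | P)$. The ``in particular'' clause then follows by taking $\tQ := \tQ_n$ for any sufficiently large fixed $n$. The principal obstacle is the joint feasibility of the five constraints on the correction: $\tau_n \ll P^1$, $R_n^\bullet \ll P^\bullet$, the convex order $\tau_n \preceq_c \theta_n$, the martingale constraint with mean $m_n$, and the vanishing entropy bound. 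These are incompatible with the simplest choices (Dirac at the barycenter, or verbatim copy of $Q^\bullet$), which is precisely why the argument must invoke the explicit convex-order stability results alluded to in the introduction.
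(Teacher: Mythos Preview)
Your high-level strategy matches the paper's: retain the bulk of $Q$ where $h$ is bounded, redistribute the discarded mass via a martingale kernel composed with $Q^\bullet$ (to recover the second marginal~$\nu$ and absolute continuity), and bound the correction's entropy via~\eqref{eq: technical cond}. But the step you rightly flag as ``the principal obstacle'' contains a real gap.

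There is first a conflation of two distinct constructions. A direct Strassen coupling of $\tau_n$ and $\theta_n$ yields \emph{some} martingale kernel, but that kernel is not ``built from $Q^\bullet$ evaluated on $B_n$'' and in particular need not satisfy $R_n^\bullet(x')\ll P^\bullet(x')$, let alone the entropy bound via~$I$. To obtain a kernel of the form $R_n^\bullet(x') = \int Q^\bullet(y)\,K(x',dy)$ you instead need a martingale kernel~$K$ from $\tau_n$ to the discarded \emph{first} marginal $Q^1|_{B_n}$---hence $\tau_n \preceq_c Q^1|_{B_n}$, not $\tau_n \preceq_c \theta_n$.

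That corrected convex order can fail outright, because the level set $A_n=\{|h|\le n\}$ carries no geometric structure. If $h$ blows up in the interior---say $h(x)=1/|x|$ with $Q^1$ atomless and symmetric about the origin---then $B_n=\{0<|x|<1/n\}$ lies \emph{inside} $A_n$; any $\tau_n\ll Q^1|_{A_n}$ is then supported on $\{|x|\ge 1/n\}$, so if it has barycenter~$0$ its second moment is at least $1/n^2$, strictly larger than that of (normalized) $Q^1|_{B_n}$. Thus $\tau_n \preceq_c Q^1|_{B_n}$ is impossible, and the ``two subsets straddling the barycenter'' device cannot rescue it.

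The paper resolves this by decoupling the geometry from~$h$. One first chooses a large \emph{interval} $A=[a,b]$, unrelated to~$h$, with $\mu_B:=\mu-\lambda_A$ concentrated on $\R\setminus(a,b)$; the elementary Lemma~\ref{lem:convex} (a measure on an interval is dominated in convex order by one on its complement with the same barycenter) gives $\mu_A\preceq_c\mu_B$. Only then is a subset $A^*\subseteq A$ on which $h$ is bounded carved out; the resulting measures $\mu_A^*,\mu_B^*$ are small perturbations of $\mu_A,\mu_B$ in total variation and~$W_1$, and the quantitative stability Lemma~\ref{lem:convex2} transfers the convex order: $\mu_A^*\preceq_c\mu_B^*$. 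Strassen then furnishes a kernel $M_\delta^\bullet$ from $\mu_A^*$ to $\mu_B^*$, and the approximation is $\tQ_\delta=\lambda_A^*\otimes Q^\bullet+\mu_A^*\otimes(M_\delta^\bullet\, Q^\bullet)$, after which the entropy estimate proceeds essentially as you anticipate.
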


The proof is lengthy and deferred to Section \ref{sec: integrability approx}. For ease of reference, we record some standard facts in the next lemma.

\begin{lemma} \label{lemma: measure theory}
	Given probability measures $Q = Q^1 \otimes Q^\bullet$ and $R = R^1 \otimes R^\bullet$ on~$\R^2$,
	
	\begin{enumerate}
	\item $Q\ll R$ if and only if $Q^1 \ll R^1$ and $Q^\bullet \ll R^\bullet$ $Q^1$-a.s.,
	\item  if $Q \ll R$, then %
	$$
		\frac{dQ}{dR}= \frac{dQ^1}{dR^1} \frac{dQ^\bullet}{dR^\bullet} \quad R\as,  %
	$$
	\item 
	$
		H(Q| R) = H(Q^1 | R^1) + E^{Q^1}[ H(Q^\bullet | R^\bullet)].
	$	
	\end{enumerate}
\end{lemma}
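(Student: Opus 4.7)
The plan is to establish the three parts in order, since each uses the previous. All measurability subtleties rest on the fact that $\R$ is Polish, so regular conditional distributions exist and one has jointly Borel versions of Radon-Nikodym densities and of Lebesgue decompositions (see, e.g., Kallenberg's \emph{Foundations of Modern Probability}).

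For (i), the forward direction splits into two pieces. Projecting onto the $X$-coordinate immediately gives $Q^1 \ll R^1$: if $R^1(A)=0$ then $R(A\times \R)=0$, hence $Q(A\times\R)=Q^1(A)=0$. For the conditional statement, I would invoke a jointly Borel Lebesgue decomposition $Q^\bullet(x,dy) = f(x,y)\,R^\bullet(x,dy) + S^\bullet(x,dy)$ with $S^\bullet(x)\perp R^\bullet(x)$, and choose a jointly Borel set $N\subset \R^2$ whose section $N_x$ carries $S^\bullet(x)$ while satisfying $R^\bullet(x,N_x)=0$. Then $R(N)=0$, so $Q(N)=0$ by $Q\ll R$, forcing $S^\bullet(x,\R)=0$ for $Q^1$-a.e.\ $x$. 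Conversely, if $Q^1\ll R^1$ and $Q^\bullet(x)\ll R^\bullet(x)$ for $Q^1$-a.e.\ $x$, a Fubini argument propagates $R(A)=0$ through both marginal and conditional layers to $Q(A)=0$.

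For (ii), using (i) I can choose a jointly Borel version of $(x,y)\mapsto(dQ^\bullet/dR^\bullet)(x,y)$ and set $h(x,y):=(dQ^1/dR^1)(x)(dQ^\bullet/dR^\bullet)(x,y)$. A direct Fubini computation shows
\[
  \int_A h\,dR = \int \frac{dQ^1}{dR^1}(x)\,Q^\bullet(x,A_x)\,R^1(dx) = \int Q^\bullet(x,A_x)\,Q^1(dx) = Q(A)
\]
for every Borel $A\subset\R^2$, which identifies $h$ with $dQ/dR$ $R$-a.s. Taking logarithms on $\{h>0\}$ and integrating against $Q$ then yields (iii). Since relative entropies are bounded below (from $t\log t \ge t-1$), each summand lies in $[0,\infty]$ and Tonelli applies unambiguously, so
\[
  H(Q|R) = E^Q[\log(dQ/dR)] = H(Q^1|R^1) + E^{Q^1}\bigl[H(Q^\bullet|R^\bullet)\bigr].
\]
The case $Q\not\ll R$ is handled separately via (i): it forces at least one summand on the right to be $+\infty$, matching the convention $H(Q|R)=+\infty$.

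The only genuine technical point is the joint Borel measurability of the conditional Lebesgue decomposition and Radon-Nikodym derivative, a standard consequence of the Polish structure of the state space. Everything else is bookkeeping via Fubini, and none of the pieces requires more than a careful tracking of null sets.
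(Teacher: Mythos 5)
The paper does not prove this lemma at all: it is recorded ``for ease of reference'' as a collection of standard facts about disintegrations, so there is no argument of the authors' to compare yours against. Your proof is essentially correct and is the standard one. The only genuinely technical ingredient is the one you identify --- jointly Borel versions of the conditional densities (equivalently, of the Lebesgue decomposition of the kernel $Q^\bullet$ with respect to $R^\bullet$), which is indeed available because $\R$ is Polish; a slightly leaner route that avoids the measurable Lebesgue decomposition altogether is to start from $Z:=dQ/dR$ and show, by testing against product sets and a countable generating algebra, that $\int_B Z(x,y)\,R^\bullet(x,dy)=\frac{dQ^1}{dR^1}(x)\,Q^\bullet(x,B)$ for $R^1$-a.e.\ $x$, which yields (i) and (ii) in one stroke.

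One step is stated too loosely: in (iii) you claim ``each summand lies in $[0,\infty]$ and Tonelli applies unambiguously,'' but the pointwise summands $\log\frac{dQ^1}{dR^1}(X)$ and $\log\frac{dQ^\bullet}{dR^\bullet}(X,Y)$ are not nonnegative --- only their ($Q^1$-, resp.\ conditional) expectations are, so Tonelli does not apply to them directly. The standard repair is either to disintegrate first and use that each conditional relative entropy $H(Q^\bullet(x)\,|\,R^\bullet(x))$ is nonnegative, or to note that the negative parts are uniformly $Q$-integrable since $E^Q\bigl[(\log \tfrac{dQ^1}{dR^1}(X))^-\bigr]=E^{R^1}\bigl[\tfrac{dQ^1}{dR^1}(\log \tfrac{dQ^1}{dR^1})^-\bigr]\le \sup_{t\ge 0}t(\log t)^-=1/e$, and likewise conditionally, so the expectation of the sum splits. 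With that adjustment (and a word on the measurability of $x\mapsto H(Q^\bullet(x)\,|\,R^\bullet(x))$, which follows from the jointly measurable density you already invoked), the argument is complete, including the $Q\not\ll R$ case via (i).
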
 

We are now ready to detail the proof of Theorem \ref{thm: integrability}.
\begin{proof}[Proof of Theorem \ref{thm: integrability}]
	Set $\mu:=Q_*^1$. We first construct a function $I$ satisfying~\eqref{eq: technical cond} with~$Q=Q_{*}$.
	By our assumption, there are constants $0 < l < L < \infty$ such that
	\begin{align}\label{eq:bound1}
		l \leq \frac{dP^\bullet(x)}{d\nu}(y) \leq L	\qquad (\mu \otimes\nu)\mbox{-a.a.}\ (x,y).
	\end{align}
	Using Lemma \ref{lemma: measure theory}(i), $Q_* \sim P$ implies that $Q_*^\bullet \sim P^\bullet \sim \nu \ \mu$-a.s. 
	Note also  
	\begin{align}
	\frac{dQ_*^\bullet(x')}{dP^\bullet(x)} (y)&= \frac{dQ_*^\bullet(x')}{dP^\bullet(x')}(y) \, \frac{dP^\bullet(x')}{dP^\bullet(x)}(y)  \nonumber \\
												  &= \frac{dQ_*^\bullet(x')}{dP^\bullet(x')}(y) \, \frac{dP^\bullet(x')}{d\nu}(y) \, \frac{d\nu}{dP^\bullet(x)} (y), \qquad (\mu \otimes \mu \otimes \nu)\mbox{-a.a.}  \ (x,x',y).\label{eq:bound2}
	\end{align}
	Combining \eqref{eq:bound1} and \eqref{eq:bound2} we obtain 
	$$ 
		\log \frac{dQ_*^\bullet(x')}{dP^\bullet(x)}(y) \leq \log \frac{dQ_*^\bullet(x')}{dP^\bullet(x')}(y) + \log (L/l)
	$$
and now integrating against $Q_*^\bullet(x')$ yields
	$$
		H\left(Q_*^\bullet(x')|  P^\bullet(x)\right) \leq H\left(Q_*^\bullet(x') | P^\bullet(x')\right) + \log (L/l) =: I(x'),\qquad (\mu \otimes \mu)\mbox{-a.a.}\ (x,x').
	$$
	Lemma \ref{lemma: measure theory}\,(iii) together with $H(Q_*| P)<\infty$ then implies $I\in L^{1}(\mu)$. 

	Noting that $Q_* \sim\mu \otimes \nu$, Lemma~\ref{le:functionalClosedness} yields that
	$$V_* = h(X)(Y-X) + g(Y)$$
	for some measurable functions $h,g$. Next, we verify that $g$ is $\nu$-integrable with $E^\nu[g] = 0$. Indeed, the function~$I$ satisfies \eqref{eq: technical cond} with $Q = Q_*$, hence Proposition~\ref{prop: integrability approx} provides $\tQ \in \tcMf(\nu)$ such that $h$ is $\tQ^1$-a.s.\ uniformly bounded. This clearly implies $h(X)(Y-X) \in L^1(\tQ)$.
	Recalling from Remark \ref{rmk: psi star int} that $V_*$ is $\tQ$-integrable, we can deduce that $g(Y)\in L^1(\tQ)$; that is, $g \in L^1(\tQ^2)= L^1(\nu)$. We can now conclude from $E^{Q_*}[V_*] = 0$ that $E^\nu[g]= E^{Q_*}[ g(Y)] = 0$, completing the proof that $V_{*}\in \cV_1$.
	
	Recall from Remark~\ref{rmk: psi star int} that $V_* \in L^1(Q)$ for all $Q \in \widetilde{\cM}_{\mathrm{fin}}(\nu)$. Having established $g \in L^1(\nu)$, this implies $h(X)(Y-X) \in L^1(Q)$ and then $E^{Q}[ h(X)(Y-X) ] = 0$ by the martingale property. As $\widetilde{\cM}_{\mathrm{fin}}(\nu)\supset \cMf(\nu)$, this shows that $V_{*}\in \cVadm$.
\end{proof}

\begin{remark}\label{rk:relaxBddnessCond}
As seen in the proof, the boundedness condition in Theorem~\ref{thm: integrability} can be weakened to the following integrability condition:
	\begin{enumerate}
	\item  $P \sim P^1 \otimes \nu$,
	\item there exists a $Q_*^1$-integrable function $I:\R \to [0,\infty)$ such that
		$$
			 E^{Q_*^\bullet(x')} \left[\left| \log \frac{dP^\bullet(x')}{dP^\bullet(x)} \right| \right] \leq I(x') \qquad  \mbox{for } (P^1 \otimes P^1)\mbox{-a.a.}\  (x,x').
		$$
	\end{enumerate}
\end{remark}

\subsection{Proof of Proposition \ref{prop: integrability approx}} \label{sec: integrability approx}

The program for this proof can be sketched as follows. First, we shall identify a sequence $(\mu_n)$ of sub-probability measures $\mu_n \ll Q^1$ such that $h$ is uniformly bounded $\mu_n$-a.e.\ and, when renormalized, $\mu-\mu_n$ dominates~$\mu_n$ in convex order. Strassen's theorem then guarantees the existence of martingale measures $M_n$ with first marginal $\mu_n$ and second marginal $\mu-\mu_n$. The desired measures~$\tQ_n$ have marginals $\mu_n/\mu_n(\R)$ and $\nu$: they will be built by embedding mass $\mu_n(\R)$ according to $\tQ_n^\bullet$ and mass $1-\mu_n(\R)$ according to the composition of $M_n^\bullet$ with $\tQ_n^\bullet$.

Let us first recall that the convex order of two probability measures $\mu,\nu$ can be characterized via their quantile functions $F_\mu^{-1}, F_\nu^{-1}$. Indeed $\mu \preceq_c \nu$ if and only if
\begin{equation} \label{eq: convex order cond}
	\int_u^1 F^{-1}_{\mu}(p) \,dp \leq \int_u^1 F^{-1}_{\nu}(p) \, dp
\end{equation}
for all $u \in [0,1]$, with equality for $u = 0$, see \cite[Theorem~3.A.5]{ShakedShanthikumar.07}.
If $\mu, \nu$ are finite measures with the same total mass, then $\mu \preceq_c \nu$ if and only if $ \mu/\mu(\R) \preceq_c \nu/\nu(\R)$. In particular, we can apply the characterization~\eqref{eq: convex order cond} to these normalized measures. To simplify notation, we omit the normalizing constant and write $F_{\mu}^{-1}$ instead of $F_{\mu/\mu(\R)}^{-1}$ in this case.

As a preparation for the proof of Proposition~\ref{prop: integrability approx}, we first establish two lemmas. Lemma~\ref{lem:convex}\,(i) has the same assertion as \cite[Example~2.4]{BeiglbockJuillet.12} but is obtained with a different, more quantitative argument which is then used in Lemma~\ref{lem:convex}\,(ii) to elaborate on finer properties. Those properties are instrumental for the proof of Lemma~\ref{lem:convex2} which describes a stability property of the convex order that will be applied in the proof of Proposition~\ref{prop: integrability approx}.

\begin{lemma}\label{lem:convex}
Let $A=[a,b]\subseteq \R$. Suppose that $\mu_A$ and $\mu_{B}$ are finite measures with the same mass and zero barycenter such that $\mu_A$ is concentrated on $A$ and $\mu_{B}$ is concentrated on $B:=\R\setminus (a,b)$.
\begin{enumerate}
\item We have $\mu_A\preceq_c \mu_{B}.$ 
\item Define
$$
	E := \left \{ u \in (0,1) : \int_{u}^1 F^{-1}_{\mu_A}(p) \, dp =\int_{u}^1 F^{-1}_{\mu_{B}}(p)\,dp \right\}. 
$$
Then $E$ is of the form $(0, \alpha] \cup [\beta, 1)$ for some $0 \leq \alpha \leq \beta \leq 1$.\footnote{The conventions $(0,0]: = \emptyset$ and $[1,1): = \emptyset$ are used.}
Furthermore,
\begin{enumerate}
\item $E = (0,1)$ if and only if $\mu_A = \mu_{B}$, in which case both measures are concentrated on $\{a, b\}$,

\item  $\mu_{B}((-\infty, a)) = 0$ and $\mu_{B}(a) > \mu_{A}(a) = \alpha$, whenever $\alpha > 0$ and $E \neq (0,1)$,

\item $\mu_{B}((b,\infty)) = 0$ and $\mu_{B}(b) > \mu_{A}(b)  = 1-\beta$, whenever $\beta < 1$ and $E \neq (0,1)$.

\end{enumerate}
\end{enumerate}
\end{lemma}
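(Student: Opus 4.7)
The plan is to work directly with the quantile characterization \eqref{eq: convex order cond}. Writing $m:=\mu_A(\R)=\mu_B(\R)$ and setting
$$
\Phi(u):=\int_u^1\bigl(F^{-1}_{\mu_B}(p)-F^{-1}_{\mu_A}(p)\bigr)\,dp,\qquad u\in[0,1],
$$
assertion (i) reduces to $\Phi\ge 0$ on $[0,1]$, while $E=\{u\in(0,1):\Phi(u)=0\}$ by definition.

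The central structural observation is the following. Let $\lambda:=\mu_B((-\infty,a])/m$. Since $\mu_A$ is supported in $[a,b]$, its (normalized) quantile function satisfies $F^{-1}_{\mu_A}(p)\in[a,b]$ for all $p\in(0,1)$. Since $\mu_B$ is supported in $B=(-\infty,a]\cup[b,\infty)$, one has $F^{-1}_{\mu_B}(p)\le a$ for $p<\lambda$ and $F^{-1}_{\mu_B}(p)\ge b$ for $p>\lambda$. By the fundamental theorem of calculus, $\Phi$ is absolutely continuous with $\Phi'(u)=F^{-1}_{\mu_A}(u)-F^{-1}_{\mu_B}(u)$ a.e., so $\Phi'\ge 0$ on $(0,\lambda)$ and $\Phi'\le 0$ on $(\lambda,1)$. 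Matching masses and zero barycenters give $\Phi(0)=0=\Phi(1)$; hence $\Phi$ first rises from $0$ and then falls back to $0$, which yields $\Phi\ge 0$ on $[0,1]$ (proving (i)) and simultaneously forces $\{\Phi=0\}\cap(0,1)=(0,\alpha]\cup[\beta,1)$ for some $0\le\alpha\le\lambda\le\beta\le 1$. This yields the first assertion of (ii).

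For the finer claims (a)--(c), I would exploit the rigidity that $\Phi\equiv 0$ on a subinterval forces $\Phi'=0$ a.e.\ there, hence $F^{-1}_{\mu_A}=F^{-1}_{\mu_B}$ on that subinterval. If $\alpha>0$, this equality on $(0,\alpha)$, together with $F^{-1}_{\mu_A}\ge a$ and $F^{-1}_{\mu_B}\le a$, pins both quantiles to the value $a$ there, which translates into $\mu_B((-\infty,a))=0$, $\mu_A(\{a\})=\alpha m$ and $\mu_B(\{a\})\ge\alpha m$. To upgrade this to strict inequality, suppose for contradiction that $\mu_B(\{a\})=\alpha m$. Then $\lambda=\alpha$, so $\Phi'\le 0$ on all of $(\alpha,1)$; combined with $\Phi(\alpha)=\Phi(1)=0$ and $\Phi\ge 0$, this forces $\Phi\equiv 0$ on $[\alpha,1]$, whence $E=(0,1)$, contrary to our hypothesis. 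This settles (b); a mirror argument at the endpoint $b$ yields (c). Claim (a) follows immediately: $E=(0,1)$ gives $\Phi\equiv 0$, hence $F^{-1}_{\mu_A}=F^{-1}_{\mu_B}$ a.e., i.e., the normalized measures coincide, and both must then be supported in $[a,b]\cap B=\{a,b\}$.

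The main bookkeeping hurdle will be the clean dictionary between quantile-level statements and point-mass statements, for example that $F^{-1}_{\mu_B}(p)=a$ on a nondegenerate initial interval $(0,\alpha)$ is equivalent to $\mu_B((-\infty,a))=0$ together with $\mu_B(\{a\})\ge\alpha m$, with strict inequality captured by $\lambda>\alpha$. Once this translation is in hand, the whole lemma follows from the unimodal geometry of $\Phi$.
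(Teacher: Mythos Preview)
Your argument is correct and follows essentially the same route as the paper: both proofs rest on the quantile characterization~\eqref{eq: convex order cond} together with the structural fact that $F^{-1}_{\mu_B}$ takes values in $(-\infty,a]$ on an initial interval and in $[b,\infty)$ thereafter, while $F^{-1}_{\mu_A}$ stays in $[a,b]$. Your packaging via the unimodal function $\Phi$ (increasing on $(0,\lambda)$, decreasing on $(\lambda,1)$, vanishing at the endpoints) is a bit more streamlined than the paper's case analysis, but the content is the same.

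One small point to tighten in part~(ii)(b): from $F^{-1}_{\mu_A}=F^{-1}_{\mu_B}=a$ on $(0,\alpha)$ you only obtain $\mu_A(\{a\})\ge\alpha m$, not the claimed equality. The equality follows \emph{after} your contradiction step establishes $\lambda>\alpha$: then $F^{-1}_{\mu_B}(p)=a$ for $p$ slightly above $\alpha$, and since $\Phi(p)>0$ there while $\Phi(\alpha)=0$, one must have $\Phi'(p)=F^{-1}_{\mu_A}(p)-a>0$, forcing $F^{-1}_{\mu_A}(p)>a$ immediately past $\alpha$ and hence $\mu_A(\{a\})=\alpha m$ exactly. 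The paper makes this same deduction explicitly; you should reorder your sentence accordingly.
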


\begin{proof}
We may assume that $\mu_A$ and $\mu_{B}$ are probability measures. We first show~(i) by verifying~\eqref{eq: convex order cond} for $u \in (0,1)$.
Indeed, define
\begin{align*}
u^{*}:=\sup\left\{p\in (0,1):\  F^{-1}_{\mu_{B}}(p)\le a \right\}.
\end{align*}
Note that $F^{-1}_{\mu_{B}}(p) \geq b$ for all $p \in (u^*, 1)$ and $F^{-1}_{\mu_A}(p) \in [a,b]$ for all $p \in (0,1)$. Hence
\begin{align} \label{eq: cor inequality cond}
\int_u^1 F^{-1}_{\mu_A}(p)\,dp\le \int_u^1 F^{-1}_{\mu_{B}}(p)\,dp
\end{align}
for all $u\in[ u^*,1)$.
Suppose, towards a contradiction, that there exists $ \hat{u} \in (0, u^*)$ such that \eqref{eq: cor inequality cond} holds with the reverse, strict inequality at $u = \hat{u}$.
As $F^{-1}_{\mu_A}(p)\geq a$ for all $p\in (0,1)$ and $F^{-1}_{\mu_{B}}(p)\le a$ for all $p \in (0, u^*)$, we deduce that
\begin{align}\label{eq:4}
\int_0^1 F^{-1}_{\mu_A}(p)\,dp> \int_0^1 F^{-1}_{\mu_{B}}(p)\,dp,
\end{align}
contradicting that $\mu_A$ and $\mu_{B}$ have the same barycenter. This shows (i).

Turning to~(ii), the proof of~(a) is immediate. We can thus assume that there exists $\tilde{u}\in (0,1)$ such that \eqref{eq: cor inequality cond} holds with strict inequality at $u = \tilde{u}$.
If there exists no $\hat{u} \in (0,\tilde{u})$ such that
\begin{align} \label{eq: quantile int eq}
\int_{\hat{u}}^1 F^{-1}_{\mu_A}(p) \,dp  =\int_{\hat{u}}^1 F^{-1}_{\mu_{B}}(p)\,dp,
\end{align}
then $\alpha = 0$. Whereas if such $\hat{u}$ exists, then necessarily $F^{-1}_{\mu_{B}}(p)\le a$ for all $p \in (0, \hat{u}]$, for otherwise $F^{-1}_{\mu_{B}}(\hat{u}) \geq b$ and the equality in~\eqref{eq: quantile int eq} cannot hold.
It follows that
\begin{align}  \label{eq: cor reverse inequality cond}
\int_{u}^1 F^{-1}_{\mu_A}(p) \,dp  \ge \int_{u}^1 F^{-1}_{\mu_{B}}(p)\,dp
\end{align}
for all $u \in (0,\hat{u}]$. Since we have shown the reverse inequality in~(i), we conclude that \eqref{eq: cor reverse inequality cond} holds with equality for all $u \in (0,\hat{u}]$. 
That is, $E$ contains an interval of the form $(0,\alpha]$ for some $\alpha \geq 0$. Changing the integral bounds from $(u,1)$ to $(0,u)$ by subtracting the barycenter on both sides of the above equations, an analogous argument shows that $E$ contains an interval of the form $[\beta,1)$ for some $\beta \in [0,1].$
In conclusion, $E$ is of the form $(0, \alpha] \cup [\beta, 1)$ for $0 \leq \alpha \leq \beta \leq 1$.

To show~(b), suppose that $E \neq (0,1)$ and $\alpha > 0$. The assumption implies that
\begin{align*}
	\int_{0}^\alpha F^{-1}_{\mu_A}(p) \,dp  =\int_{0}^\alpha F^{-1}_{\mu_{B}}(p)\,dp.
\end{align*}
Consequently, $F^{-1}_{\mu_A}(p) = a = F^{-1}_{\mu_{B}}(p)$ for all $p \in (0, \alpha]$. That is, $\mu_{B}((-\infty, a)) = 0$, along with $\mu_A(a) \geq \alpha$ and $\mu_{B}(a) \geq \alpha$. Since
\begin{align*}
	\int_{0}^u F^{-1}_{\mu_A}(p) \,dp  > \int_{0}^u F^{-1}_{\mu_{B}}(p)\,dp \quad  \mbox{for} \quad u \in (\alpha, \beta),
\end{align*}
it is necessary that $F^{-1}_{\mu_A}(p) \in (a,b]$ for  $p \in (\alpha, 1)$ and that $F^{-1}_{\mu_{B}}(p) = a$ for all $p > \alpha$ that are sufficiently close to $\alpha$. We conclude that $\mu_{B}(a) > \mu_{A}(a) = \alpha$, showing~(b). Part~$(c)$ is proved analogously.
\end{proof}

\begin{lemma}\label{lem:convex2}
In the setting of Lemma \ref{lem:convex}, suppose that $\mu_A\neq \mu_{B}$ are probability measures. Let $(\mu^n_A)$, $(\mu_{B}^n)$ be  sequences of probability measures with barycenter zero such that $\mu_A^n\ll \mu_A$ for all $n$ as well as $d_{\mathrm{TV}}(\mu_A^n,\mu_A)\to 0$, $d_{\mathrm{TV}}(\mu_{B}^n,\mu_{B})\to 0$ and $W_1(\mu_{B}^n, \mu_{B}) \to 0$ for $n\to \infty$. Then $\mu_{A}^n\preceq_c \mu_{B}^n$ for all~$n$ sufficiently large.
\end{lemma}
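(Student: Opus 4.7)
The strategy is to use the quantile-integral characterization of convex order recalled in~\eqref{eq: convex order cond}. Set
$$
G_n(u) := \int_u^1 \bigl(F^{-1}_{\mu_B^n}(p) - F^{-1}_{\mu_A^n}(p)\bigr)\, dp, \qquad G(u) := \int_u^1 \bigl(F^{-1}_{\mu_B}(p) - F^{-1}_{\mu_A}(p)\bigr)\, dp,
$$
so the conclusion $\mu_A^n \preceq_c \mu_B^n$ (for $n$ large) reduces to $G_n \geq 0$ on $[0,1]$. The zero-barycenter assumption yields $G_n(0) = G_n(1) = 0$ and the equivalent expression $G_n(u) = \int_0^u (F^{-1}_{\mu_A^n} - F^{-1}_{\mu_B^n})\,dp$. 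The first step is uniform convergence $\|G_n - G\|_\infty \to 0$: since $\mu_A^n, \mu_A$ both sit on the compact set $[a,b]$, TV convergence gives $W_1(\mu_A^n, \mu_A) \leq (b-a)\, d_{\mathrm{TV}}(\mu_A^n, \mu_A) \to 0$, and combining with the hypothesis $W_1(\mu_B^n, \mu_B) \to 0$ via the identity $W_1(\mu,\nu) = \|F^{-1}_\mu - F^{-1}_\nu\|_{L^1(0,1)}$ gives the claim. By Lemma~\ref{lem:convex}\,(i) we have $G \geq 0$, and since $\mu_A \neq \mu_B$, Lemma~\ref{lem:convex}\,(ii) locates the zero set of $G$ inside $(0,1)$ as $(0,\alpha] \cup [\beta, 1)$ with $\alpha < \beta$.

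On a compact subinterval $[\alpha+\delta, \beta-\delta] \subset (\alpha,\beta)$, $G$ attains a positive minimum, so uniform convergence gives $G_n > 0$ there for large $n$. It remains to cover the boundary regions $[0, \alpha+\delta]$ and $[\beta-\delta, 1]$, which are symmetric; I describe the left one. If $\alpha > 0$, Lemma~\ref{lem:convex}\,(ii)(b) yields $\mu_B((-\infty,a)) = 0$ and $\mu_B(\{a\}) > \alpha$. By TV convergence,
$$\mu_B^n((-\infty,a]) = \mu_B^n((-\infty,a)) + \mu_B^n(\{a\}) \longrightarrow \mu_B(\{a\}) > \alpha,$$
so for $\delta < \mu_B(\{a\}) - \alpha$ and $n$ large, $\mu_B^n((-\infty,a]) > \alpha + \delta$. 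For such $p \in (0, \alpha+\delta]$ this forces $F^{-1}_{\mu_B^n}(p) \leq a \leq F^{-1}_{\mu_A^n}(p)$ (the latter inequality because $\mu_A^n$ sits on $[a,b]$), so the integrand $F^{-1}_{\mu_A^n} - F^{-1}_{\mu_B^n}$ is pointwise nonnegative and hence $G_n \geq 0$ on $[0, \alpha+\delta]$.

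The main obstacle is the edge case $\alpha = 0$ (symmetrically $\beta = 1$), where the structural dichotomy in Lemma~\ref{lem:convex}\,(ii)(b) is vacuous. The resolution is that $\mu_A \neq \mu_B$ already forces $\mu_B((-\infty, a]) > 0$: otherwise $\mu_B$ would be concentrated on $[b,\infty)$, and zero barycenter would force $b = 0$ and $\mu_B = \delta_0$, whence $\mu_A \preceq_c \delta_0$ together with Jensen gives $\mu_A = \delta_0 = \mu_B$, a contradiction. With $c := \mu_B((-\infty,a]) > 0$ in hand, TV convergence yields $\mu_B^n((-\infty,a]) > c/2$ for large $n$, and the same pointwise sign argument as above shows $G_n \geq 0$ on $[0, c/2]$, which covers $[0, \alpha+\delta]$ for $\delta < c/2$. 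The symmetric argument at the right endpoint (using Lemma~\ref{lem:convex}\,(ii)(c) or the analogous nontriviality $\mu_B([b,\infty)) > 0$) completes the proof. The boundary analysis, and in particular identifying that TV convergence (rather than merely weak or $W_1$ convergence) is needed to pin down small atoms of $\mu_B^n$ at $a$ and $b$, is the delicate part.
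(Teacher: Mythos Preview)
Your proof is correct and follows essentially the same approach as the paper: split $[0,1]$ into the interior region $[\alpha+\delta,\beta-\delta]$ where $G>0$ and uniform convergence $G_n\to G$ suffices, and the boundary regions where TV convergence of $\mu_B^n$ pins down enough mass in $(-\infty,a]$ (resp.\ $[b,\infty)$) to force the pointwise sign $F^{-1}_{\mu_A^n}\ge F^{-1}_{\mu_B^n}$. Two minor remarks: your derivation of uniform convergence via $W_1(\mu_A^n,\mu_A)\le (b-a)\,d_{\mathrm{TV}}(\mu_A^n,\mu_A)$ is slightly cleaner than the paper's route (which mixes dominated convergence for the $A$-part with the $W_1$ identity for the $B$-part), and you explicitly justify $\mu_B((-\infty,a])>0$ in the edge case $\alpha=0$, a point the paper asserts without comment.
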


Here $W_{1}$ denotes $1$-Wasserstein distance, and we emphasize that the lemma does not require $\mu_{B}^n \ll \mu_{B}$.

\begin{proof}
Consider the set $E$ in Lemma \ref{lem:convex}\,(ii). As $\mu_A\neq \mu_{B}$, we have $E \neq (0,1)$ and $\alpha < \beta$. Let us consider the cases $\alpha > 0$ and $\alpha = 0$ separately.
If $\alpha > 0$,  Lemma~\ref{lem:convex}\,(ii)\,(b) states that $\bar{\alpha} := \mu_{B}(a) > \mu_{A}(a) = \alpha$. Let $\bar{\alpha}_n := \mu_{B}^n(a)$ and $\alpha_n := \mu_A^n(a)$. Since $d_{\mathrm{TV}}(\mu_A^n,\mu_A)\to 0$ and $d_{\mathrm{TV}}(\mu_{B}^n,\mu_{B})\to 0$ as $n\to \infty$, it follows that $\bar{\alpha}_n \to \bar{\alpha}$ and $\alpha_n \to \alpha$. In view of $\mu_A^n \ll \mu_A$, we conclude that $F_{\mu_A^n}^{-1}(p) \in [a,b]$ for $p \in (0,1)$. 
Fix $\eps < \bar\alpha - \alpha$. Then $\bar{\alpha}_n > \alpha + \eps$ when $n$ is sufficiently large, and 
\begin{equation} \label{eq: left end}
	\int_0^u F^{-1}_{\mu^n_A}(p)\, dp \geq \int_{0}^u F^{-1}_{\mu^n_{B}}(p)\,dp  \quad \mbox{for}\quad u \in (0,\alpha+\eps].
\end{equation}
Whereas in the case $\alpha = 0$, we define $\bar{\alpha} := \mu_{B}((-\infty, a]) > 0$ and $\bar{\alpha}_n := \mu_{B}^n((-\infty, a])$. Again, for a fixed $\eps < \bar\alpha - \alpha=\bar\alpha$, we have $\bar{\alpha}_n > \alpha + \eps$ when $n$ is sufficiently large, and~\eqref{eq: left end} holds.

Similarly, we consider the cases $\beta < 1$ and $\beta = 1$ and define $\bar{\beta}$ accordingly. In either case we can fix $\eps <  \beta - \bar \beta$ and find $n$ sufficiently large so that $\bar{\beta}_n < \beta - \eps$ and
\begin{equation} \label{eq: right end}
	\int_u^1 F^{-1}_{\mu^n_A}(p)\, dp \leq \int_{u}^1 F^{-1}_{\mu^n_{B}}(p)\,dp \quad \mbox{for}\quad u \in [\beta - \eps, 1).
\end{equation}

To complete the proof, it remains to show that the inequality in~\eqref{eq: right end} holds for $u \in O:=(\alpha + \eps, \beta - \eps)$, where $\eps < \min\{ \bar\alpha-\alpha, \beta-\bar\beta\}$ is fixed. Note that $(0,1) \setminus E = (\alpha, \beta)$ and $O \subsetneq (\alpha, \beta)$. As the integrals below are continuous functions of~$u$, there exists $\gamma>0$ such that 
\begin{align}\label{eq: existence of gap}
\int_u^1 F^{-1}_{\mu_A}(p)\,dp +\gamma \le \int_u^1 F^{-1}_{\mu_{B}}(p)\,dp  \quad \mbox{for all}\quad u\in O,
\end{align}
thanks to the definition of $E$. In view of $d_{\mathrm{TV}}(\mu_A^n,\mu_A)\to 0$ and $d_{\mathrm{TV}}(\mu_{B}^n,\mu_{B})\to 0$, the quantile functions converge pointwise. Moreover, we recall that the $1$-Wasserstein distance satisfies 
\begin{equation*}
	W_1(\mu_B^n, \mu_B) = \int_0^1 \left| F^{-1}_{\mu_B^n}(p) - F^{-1}_{\mu_B}(p) \right|  \, dp.
\end{equation*}
Dominated convergence and $W_1(\mu_{B}^n, \mu_{B}) \to 0$ thus imply that
\begin{align*}
\lim_{n\to\infty} \int_u^1 F^{-1}_{\mu_{A}^n}(p)\,dp=\int_u^1 F^{-1}_{\mu_{A}}(p)\,dp, \qquad \lim_{n\to\infty} \int_u^1 F^{-1}_{\mu_{B}^n}(p)\,dp=\int_u^1 F^{-1}_{\mu_{B}}(p)\,dp
\end{align*}
uniformly in $u\in O$.
It now follows from \eqref{eq: existence of gap} that
\begin{align*}
\int_u^1 F^{-1}_{\mu_{A}^n}(p)\,dp&\le \int_u^1 F^{-1}_{\mu_{A}}(p)\,dp+\frac{\gamma}{2}\le \int_u^1 F^{-1}_{\mu_{B}}(p)\,dp-\frac{\gamma}{2}\le \int_u^1 F^{-1}_{\mu_{B}^n}(p)\,dp
\end{align*}
for all $u\in O$ and $n \in \N$ large enough. This completes the proof.
\end{proof}

Given measures $\lambda,\mu$ on $\R$, we write $\lambda \le \mu$ if $\lambda(A) \le \mu(A)$ for all $A\in\cB(\R)$.
The total variation distance between $\lambda$ and $\mu$ is defined as 
	$$
	d_{\mathrm{TV}}(\lambda,\mu) = \sup \left\{ | \lambda(A) - \mu(A)|  : A \in \cB(\R) \right\}.
	$$
If $\lambda \leq \mu$, it is clear that $d_{\mathrm{TV}}(\lambda,\mu) = (\mu - \lambda)(\R)$.
For ease of reference, we record the following consequence.

\begin{lemma}
Let $0\neq \lambda\le \mu$ be finite measures on $\R$. Then the probability measures $\bar \lambda := {\lambda}/{\lambda(\R)}$ and $\bar \mu := {\mu}/{\mu(\R)}$ satisfy $\bar \lambda \ll \bar\mu$ and 
\begin{equation} \label{eq: TV formula}
	d_{\mathrm{TV}}(\bar\lambda,\bar\mu) \leq \frac{(\mu - \lambda)(\R)}{\mu(\R)}.
\end{equation}
\end{lemma}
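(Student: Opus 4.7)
The plan is to treat the two claims separately. The absolute continuity $\bar\lambda \ll \bar\mu$ is immediate from the hypothesis $\lambda\le\mu$: if $\bar\mu(A)=0$, then $\mu(A)=0$, hence $\lambda(A)\le\mu(A)=0$, and so $\bar\lambda(A)=0$. So the substance of the lemma is the total variation estimate~\eqref{eq: TV formula}.

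For the estimate, my first step is to exploit the decomposition $\mu=\lambda+(\mu-\lambda)$ to derive the identity
\begin{equation*}
  \bar\mu-\bar\lambda \;=\; \frac{\mu-\lambda}{\mu(\R)} \;-\; \frac{(\mu-\lambda)(\R)}{\mu(\R)}\,\bar\lambda,
\end{equation*}
which is a straightforward rearrangement of $\frac{\mu}{\mu(\R)}-\frac{\lambda}{\lambda(\R)}$ after substituting $\lambda(\R)^{-1}-\mu(\R)^{-1}=(\mu-\lambda)(\R)/(\mu(\R)\lambda(\R))$. The second step is to evaluate this identity at an arbitrary $A\in\cB(\R)$: the first summand then lies in $[0,(\mu-\lambda)(\R)/\mu(\R)]$ because $\mu-\lambda$ is a nonnegative finite measure, while the second summand lies in $[-(\mu-\lambda)(\R)/\mu(\R),0]$ because $\bar\lambda$ is a probability measure. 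Adding these bounds yields $|\bar\mu(A)-\bar\lambda(A)|\le (\mu-\lambda)(\R)/\mu(\R)$, and taking the supremum over $A$ gives~\eqref{eq: TV formula}.

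The only point that requires slight care, and the main thing to avoid, is a direct triangle-inequality attack through the intermediate sub-probability measure $\lambda/\mu(\R)$: that route would bound each of the two summands in $\bar\mu-\bar\lambda$ separately by $(\mu-\lambda)(\R)/\mu(\R)$ and thereby lose a factor of two. The argument sketched above avoids this because the two summands have opposite signs pointwise, so the worst case is that one is zero and the other attains its extreme value, placing $\bar\mu(A)-\bar\lambda(A)$ in a symmetric interval of radius $(\mu-\lambda)(\R)/\mu(\R)$ rather than $2(\mu-\lambda)(\R)/\mu(\R)$.
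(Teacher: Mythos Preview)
Your proof is correct. The identity
\[
  \bar\mu-\bar\lambda \;=\; \frac{\mu-\lambda}{\mu(\R)} \;-\; \frac{(\mu-\lambda)(\R)}{\mu(\R)}\,\bar\lambda
\]
checks out, and evaluating it on a set~$A$ gives a sum of a term in $[0,(\mu-\lambda)(\R)/\mu(\R)]$ and one in $[-(\mu-\lambda)(\R)/\mu(\R),0]$, so the difference lies in the symmetric interval of the stated radius. Your comment on why a naive triangle inequality through $\lambda/\mu(\R)$ would overshoot by a factor of two is apt.

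The paper takes a different route: it uses the density formula $d_{\mathrm{TV}}(\bar\lambda,\bar\mu)=E^{\bar\mu}\big[(1-d\bar\lambda/d\bar\mu)^{+}\big]$, rewrites $d\bar\lambda/d\bar\mu=\frac{\mu(\R)}{\lambda(\R)}\,\frac{d\lambda}{d\mu}$, and then applies the monotonicity $(1-cf)^{+}\le(1-f)^{+}$ for $c\ge1$, $f\ge0$ to drop the factor $\mu(\R)/\lambda(\R)$; since $d\lambda/d\mu\le1$, the resulting integral is exactly $(\mu-\lambda)(\R)/\mu(\R)$. So the paper works at the level of Radon--Nikodym densities and a single pointwise inequality, while you work set-by-set with an explicit signed decomposition of $\bar\mu-\bar\lambda$. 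Your approach is slightly more elementary in that it avoids invoking the $L^1$ characterization of total variation, and it makes transparent exactly where the opposite signs save the factor of two; the paper's version is a touch shorter as a single chain of (in)equalities.
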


\begin{proof}
We have $\bar \lambda \ll \bar\mu$ and $\lambda(\R) \leq \mu(\R)$, so that
	\begin{align*}
		d_{\mathrm{TV}}(\bar\lambda,\bar\mu)
		= E^{\bar\mu}\left[\left( 1 - \frac{d\bar\lambda}{d\bar\mu} \right)^+\right]&= \frac{1}{\mu(\R)} E^{\mu} \left[\left( 1 - \frac{\mu(\R)}{\lambda(\R)}\frac{d\lambda}{d\mu} \right)^+ \right]\\
		&\leq \frac{1}{\mu(\R)} E^\mu\left[ \left( 1 - \frac{d\lambda}{d\mu} \right)^+\right]
		= \frac{1}{\mu(\R)} (\mu - \lambda)(\R). \qedhere
	\end{align*}
\end{proof}

We are now ready to give the proof of Proposition \ref{prop: integrability approx}.

\begin{proof}[Proof of Proposition \ref{prop: integrability approx}]
To simplify notation we write $\mu:=Q^1$ and assume without loss of generality that $\mu, \nu$ have zero barycenter. We may also assume that $\mu\neq\delta_0$; otherwise the claim is trivial. The proof proceeds in six steps.\\

\begin{step}\label{st:step1}
Given $\delta > 0$ sufficiently small we shall construct sub-probability measures~$\mu_A, \mu_{B}$ (depending on $\delta$) that satisfy the hypotheses of Lemma~\ref{lem:convex}, and hence $\mu_A\preceq_c \mu_{B}$.

As $\mu$ has zero barycenter and is not a Dirac measure, there exists $c > 0$ such that $\tilde \delta := \mu((-\infty, -c]) \wedge \mu([c, \infty)) > 0$. Then, for all $0 < \delta < \tilde \delta$, there exist an interval $A:= [a,b]$ and a measure $\lambda_A$ such that
\[
 \mu((-\infty, a] \cup [b, \infty))\le \delta, \qquad \mu((a,-c])\ge \delta, \qquad \mu([c, b))\ge \delta
\]
and
\begin{itemize}
	\item  $\lambda_A \le \mu| _A$,
	\item $\lambda_A$ has zero barycenter,
	\item $\mu-\lambda_A$ is concentrated on $\R\setminus (a,b)$ and nonzero.
\end{itemize}
In particular $\mu| _{(a,b)} \leq \lambda_A \leq \mu| _A$.
In fact, if $\mu$ has no atoms, we can set $\lambda_{A}=\mu| _{A}$. In the presence of atoms at $a$ or $b$, we may have to remove part of that mass so that $\lambda_A$ has zero barycenter and $\lambda_A\neq \mu$.
Define
$$\mu_A:=\left(\frac{1}{\lambda_A(\R)}-1\right)\lambda_A \quad \text{and}\quad \mu_{B}:=\mu-\lambda_A.$$
Evidently, the hypotheses of Lemma \ref{lem:convex} are satisfied, and hence $\mu_A\preceq_c \mu_{B}$.
\end{step}

\begin{step}\label{st:step2}
Fix $\eps>0$. As $h$ takes values in $\R$ there exists a set $A^\eps\subseteq A$ such that $h$ is uniformly bounded on $A^\eps$ and 
\begin{align}\label{eq:rhs}
\mu(A\setminus A^\eps)\le \frac{\delta \wedge \eps}{2} \left( 1 \wedge \frac{c}{| a| \vee | b| }\right).
\end{align}
The choice of the upper bound in \eqref{eq:rhs} ensures that
\begin{align*}
\mu\left((a,-c]\cap A^\eps\right)\ge \frac{\delta}{2},\quad \mu\left([c, b)\cap A^\eps\right)\ge \frac{\delta}{2}
\end{align*}
and, writing $X$ for the identity function on $\R$ in an abuse of notation,
\begin{align*}
\left| E^{\lambda_A} \left[ \1_{A^\eps} X \right]\right| &\le \left| E^{\lambda_A}\left[\1_A X \right] \right| + \left| E^{\lambda_A}\left[\1_{A\setminus A^\eps} X \right] \right| \\
&\le 0+ (| a| \vee | b| )  \frac{(\delta\wedge\eps) c}{2 (| a| \vee | b| )}=\frac{(\delta\wedge \eps) c}{2}.
\end{align*}
By restricting $\lambda_A$ to $A^\eps$ and possibly removing some mass on $(a,-c]$ or $[c,b)$, we can construct a measure $\lambda^\ep_A$ that is concentrated on $A^\eps$, satisfies $\lambda_A^\eps\leq \lambda_A \leq \mu$, has zero barycenter and
$$
	d_{\mathrm{TV}}(\lambda^\eps_A,\lambda_A)=(\lambda_A - \lambda^\ep_A)(\R) \leq \mu(A \setminus A^\eps) + \frac{1}{c} \left| E^{\lambda_A}[\1_{A^\eps} X ]\right|   \leq \delta \wedge \eps.
$$
In consequence,
\begin{equation} \label{eq: lambda ep size}
	\lambda^\ep_A (\R) = \lambda_A(\R) - (\lambda_A - \lambda^\ep_A)(\R)  \geq 1 - \delta - \delta \wedge \eps.
\end{equation}
\end{step}

\begin{step}\label{st:step3}
Let $\delta \leq \tilde\delta$ be fixed and consider a sequence $(\eps_n)_{n\in \N} \downarrow 0$.  For each $n\in\N$ we apply Step~\ref{st:step2} to obtain
\begin{equation} \label{eq: def mu n}
	\mu^n_A := \left(\frac{1}{\lambda^n_A(\R)}-1\right) \lambda^n_A \quad \text{and} \quad
	\mu^n_{B}:=\mu - \lambda^n_A,
\end{equation}
where $\lambda^n_A := \lambda^{\ep_n}_A$.
Both measures have the same total mass and zero barycenter for all $n\in \N$. Moreover we have $\lambda^n_A \leq \lambda_A$ and $d_{\mathrm{TV}}(\lambda_A,\lambda_A^n)= (\lambda_A - \lambda^n_A)(\R) \leq \eps_n \downarrow 0$.

In order to apply Lemma~\ref{lem:convex2}, we first need to scale the measures $\mu_A, \mu_{B}, \mu_{A}^n$ and $\mu_{B}^n$ so that they are probability measures. Set
\begin{equation} \label{eq: def mu bar}
	\bar{\mu}_A = \frac{\lambda_A}{\lambda_A(\R)} \quad \text{and} \quad \bar{\mu}_{B} = \frac{\mu - \lambda_A}{1 - \lambda_{A}(\R)}
\end{equation}
and define $\bar{\mu}_A^n$ and $\bar{\mu}^n_{B}$ analogously. Observe that $\bar{\mu}^n_A \ll \bar{\mu}_A$ and 
$$
	d_{\mathrm{TV}}(\bar{\mu}_A^n, \bar{\mu}_A) \leq \frac{(\lambda_A - \lambda_A^n)(\R)}{\lambda_A(\R)} \leq \frac{\ep_n}{\lambda_A(\R)} \downarrow 0
$$
by \eqref{eq: TV formula}. Similarly we have $d_{\mathrm{TV}}(\bar{\mu}_{B}^n, \bar{\mu}_{B}) \to 0$. In particular it suffices to show that $
	E^{\mu^n_{B}}[| X|]\to E^{\mu_B} [| X|]
$
in order to verify $W_1( \bar \mu^n_{B}, \bar \mu_{B}) \to 0$.
In light of the definition of $\bar \mu_{B}$ in \eqref{eq: def mu bar} and $d_{\mathrm{TV}}(\lambda_{A}^n, \lambda_{A}) \to 0$  this readily follows from $E^{\lambda_A^n} [| X|]  \to E^{\lambda_A}[ | X| ]$.

Now we are in a position to apply Lemma \ref{lem:convex2} to $\bar\mu_A, \bar\mu_{B}, \bar\mu_{A}^n$ and $\bar\mu_{B}^n$, which yields $n_0 \in \N$ such that $\bar\mu_{A}^{n_0} \preceq_c \bar\mu^{n_0}_{B}$.
Since the convex order is invariant under scaling, it follows that
\begin{equation} \label{eq: mu star}
	\mu^*_{A}:=\mu_{A}^{n_0} \preceq_c  \mu^{n_0}_{B}=:\mu^*_{B}.
\end{equation}
We recall that $h$ is uniformly bounded on $A^* := A^{\eps_{n_0}}$ by construction and define 
\begin{equation} \label{eq: lambda A star}
	\lambda_A^*:= \lambda_A^{n_0}
\end{equation}
in preparation for Step~\ref{st:step5} below.
\end{step}

\begin{step}\label{st:step4} %
By \cite[Theorem~8]{Strassen.65} the relation $\mu_A^*\preceq_c \mu_{B}^*$ implies the existence of a mean-preserving probability kernel $M_\delta^\bullet:\R\to \mathcal{P}(\R)$ sending~$\mu_A^*$ to~$\mu_{B}^*$; that is, $M_\delta := \mu_A^* \otimes M_\delta^\bullet \in \Pi(\mu_A^*, \mu_{B}^*)$ and $M_\delta^\bullet(x)$ has barycenter~$x$ for all~$x\in\R$.

Recall that  $Q= \mu \otimes Q^\bullet\in\mathcal{M}_{\mathrm{fin}}(\nu)$ and denote by $Q^\bullet_{\delta}$ the composition of~$M_\delta^\bullet$ with~$Q^\bullet$,
\begin{align} \label{eq: kappa delta}
	Q^\bullet_{\delta}(x,C) := E^{ M_\delta^\bullet(x)}[Q^\bullet(\cdot ,C)]\quad \text{for} \quad C\in \mathcal{B}(\R).
\end{align}
Note that $Q^\bullet_{\delta}$ is again mean-preserving. %
\end{step}

\begin{step}\label{st:step5}
For $\delta \leq \tilde\delta$, we set
\begin{align}\label{eq:QdeltaDef}
\tQ_\delta := \lambda_{A}^* \otimes Q^\bullet + \mu_{A}^* \otimes Q^\bullet_\delta,
\end{align}
where $\mu_A^*, \lambda_{A}^*$ and $Q^\bullet_\delta$ were defined in  \eqref{eq: mu star}, \eqref{eq: lambda A star} and \eqref{eq: kappa delta}, respectively (the set~$A$ and the measures $\lambda_{A}^*,\mu_A^*$ depend on $\delta$).
In view of~\eqref{eq: def mu n}, the first marginal of $\tQ_\delta$ is the probability measure
$
	\bar \mu^*_{A} := {\lambda_{A}^*}/{\lambda_{A}^*(\R)}.
$

We claim that $\tQ_\delta \in \widetilde{\cM}(\nu)$. 
Indeed, recall $\mu \otimes Q^\bullet = Q \sim P$ and observe that $\mu_{A}^* \sim \lambda_A^* \ll \mu$, $Q^\bullet \sim \nu$ $\mu$-a.s. and $Q^\bullet_\delta \ll \nu$ $\mu_A^*$-a.s. 
Lemma~\ref{lemma: measure theory}\,(i) then yields $\tQ_\delta \ll P$. Moreover, the martingale property $E^{\tQ_\delta}[Y| X] = X$ follows from the fact that $Q^\bullet$ and $Q^\bullet_\delta$ are mean-preserving. Finally,
recall that $Q^\bullet_\delta$ is the composition of $M_\delta^\bullet$ with $Q^\bullet$, and $\mu_A^* \otimes M_\delta^\bullet \in \Pi(\mu_A^*, \mu_{B}^*)$. Thus $\mu = \lambda_A^* + \mu_{B}^*$ and $\mu \otimes Q^\bullet \in \Pi(\mu, \nu)$ imply $\tQ_\delta \in \Pi \left( \bar \mu^*_{A}, \nu \right)$ and in particular $\tQ_\delta^{2}=\nu$. In summary, $\tQ_\delta \in \widetilde{\cM}(\nu)$ as desired.

From Step~2 and \eqref{eq: lambda ep size} we see that $\lambda_A^* \leq \mu$ and $\lambda_A^*(\R) \geq 1 - 2 \delta$, hence $\lambda_A^*\to \mu$ and $\mu_{A}^*\to0$ in variation as $\delta\to0$. It is now clear from the definition~\eqref{eq:QdeltaDef} that $\tQ_\delta\to \mu \otimes Q^\bullet=Q$ in variation. Moreover, as $\lambda_A^*$ is concentrated on $A^*$ (cf.\ Step~\ref{st:step3}) and $h$ is uniformly bounded on $A^*$, we see that $h$ is $\bar \mu^*_{A}$-a.s.\ uniformly bounded for every $\delta \leq \tilde\delta$. This proves Proposition~\ref{prop: integrability approx}\,(i),(ii) after choosing $\delta=\delta(n)$ small enough, modulo showing that $H(\tQ_\delta|P)<\infty$ for small~$\delta$ (which will follow from the next step).
\end{step}

\begin{step}\label{st:step6}
As $\liminf_{\delta\to 0} H(\tQ_\delta| P)  \ge H(Q| P)$ due to the lower semicontinuity of~$H(\cdot |P)$ and the convergence $\tQ_\delta\to Q$, it remains to show 
\begin{equation}\label{eq:step6limsup}
  \limsup_{\delta\to 0} H(\tQ_\delta |  P)\le H(Q |  P).
\end{equation}
 Note that $\tQ_\delta$ is a convex combination of two probability measures:
\begin{align*}
	\tQ_\delta= \lambda_A^* (\R)  \bar \mu^*_{A} \otimes Q^\bullet + \left[ 1-\lambda_A^* (\R) \right]  \bar \mu^*_{A} \otimes Q^\bullet_\delta.
\end{align*}
As $H(\cdot |P)$ is convex, it follows that
\begin{align}\label{eq:step6convex}
	H(\tQ_\delta |  P) \le \lambda_A^*(\R)  H\left( \bar \mu^*_{A} \otimes Q^\bullet |  P\right) +  \left[ 1-\lambda_A^*(\R) \right] H\left( \bar \mu^*_{A} \otimes Q^\bullet_\delta|  P\right).
\end{align}
We show that the first term converges to $H(Q |  P)$ and the second converges to zero.
Indeed, Lemma~\ref{lemma: measure theory}\,(iii) yields 
$
	H\left( \bar \mu^*_{A} \otimes Q^\bullet|  P \right) = H\left(\bar \mu^*_{A} |  P^1\right)
	+ E^{\bar \mu^*_{A}}[ H(Q^\bullet | P^\bullet)]
$, where
\begin{align}\label{eq:step6firstEnt}
	H\left(\bar \mu^*_{A} |  P^1 \right) &=\frac{1}{\lambda_A^*(\R)} E^{\lambda_A^*}\left[ \log\left(\frac{d \lambda_A^*}{dP^1} \right) \right] - \log \lambda_A^*(\R) \nonumber \\
												&\to E^\mu\left[ \log\left(\frac{d \mu}{dP^1} \right) \right] =H(\mu |  P^1)
\end{align}
by dominated convergence and $\lambda_A^*(\R) \ge 1-2\delta$. Similarly, $E^{\bar \mu^*_{A}}[ H(Q^\bullet | P^\bullet)] \to E^{\mu}[ H(Q^\bullet | P^\bullet)]$, so that the first term in~\eqref{eq:step6convex} satisfies
$$
  \lambda_A^*(\R)  H\left( \bar \mu^*_{A} \otimes Q^\bullet |  P\right) \to H(\mu |  P^1) + E^{\mu}[ H(Q^\bullet | P^\bullet)] = H(Q|R). 
$$
It remains to show that the second term in~\eqref{eq:step6convex} converges to zero,
$$ 
	[1-\lambda_A^*(\R)]  H\left( \bar \mu^*_{A} \otimes Q^\bullet_\delta|  P\right) \to 0.
$$ 
Using again Lemma \ref{lemma: measure theory}\,(iii),
\begin{align}\label{eq:step6decomp}
	H\left( \bar \mu^*_{A} \otimes Q^\bullet_\delta|  P \right) = H\left(\bar \mu^*_{A} |  P^1\right) + E^{\bar{\mu}_A^*}[ H( Q^\bullet_{\delta} |  P^\bullet)].
\end{align}
In view of~\eqref{eq:step6firstEnt} and $\lambda_A^*(\R)\to1$,  it follows that $\left[ 1-\lambda_A^*(\R) \right] H\left(\bar \mu^*_{A} |  P^1\right)\to0$.
For the second term in~\eqref{eq:step6decomp}, we use the definitions of $\bar{\mu}_A^*$ and $\mu_A^*$ to see that
\begin{align*}
[1-\lambda_A^*(\R)] E^{\bar{\mu}_A^*} \left[H( Q^\bullet_{\delta}|  P^\bullet)\right]
=\frac{1-\lambda_A^*(\R)}{\lambda_A^*(\R)} E^{\lambda_A^*} \left[H( Q^\bullet_{\delta}|  P^\bullet)\right]
= E^{\mu_A^*}[ H( Q^\bullet_{\delta}|  P^\bullet)].
\end{align*}
In view of \eqref{eq: kappa delta}, Jensen's inequality and convexity of $H$ imply
\begin{align*}
E^{\mu_A^*}[ H( Q^\bullet_{\delta}|  P^\bullet)]
= E^{\mu_A^*} \left[H\left( E^{ M_\delta^\bullet(X)}[Q^\bullet]  |  P^\bullet(X) \right) \right]
\le E^{\mu_A^*}\left[E^{M_\delta^\bullet(X)}[ H( Q^\bullet |  P^\bullet(X)) ]\right].
\end{align*}
Lastly, the assumptions that $H(Q^\bullet(x')| P^\bullet(x)) \leq I(x')$ for $(\mu\otimes \mu)$-a.a $(x,x')$ and $I \in L^{1}(\mu)$ together with the facts that $\mu_A^* \otimes M_\delta^\bullet \in \Pi(\mu_A^*, \mu_{B}^*)$ and $\mu_B^* \to 0$ yield
\begin{align*}
 E^{\mu_A^*}\left[E^{M_\delta^\bullet(X)}[ H( Q^\bullet |  P^\bullet(X)) ]\right]
\le E^{\mu_A^*}\left[E^{M_\delta^\bullet(X)}[ I]\right] 
=  E^{\mu_{B}^*}[ I ]  \to 0
\end{align*}
by the dominated convergence theorem.
This shows~\eqref{eq:step6limsup} and hence Proposition~\ref{prop: integrability approx}\,(iii), completing the proof. \qedhere
\end{step}
\end{proof}

\bibliography{stochfin_LZ} 
\bibliographystyle{abbrv}

\end{document}